\theoremstyle{plain}
\newtheorem{theorem}{Theorem}[section]
\newtheorem{lemma}[theorem]{Lemma}
\newtheorem{corollary}[theorem]{Corollary}
\theoremstyle{definition}
\theoremstyle{remark}
\newtheorem{remark}{Remark}
\author{Marc Jorba-Cusc\'o$^{(1)}$\thanks{Current address: Departament de Matem\`atiques, Universitat Polit\`ecnica de Catalunya (UPC), {\tt marc.jorba@upc.edu} and {\tt daniel.perez.palau@upc.edu}}\and  Ruth I. Oliva-Z\'uniga$^{(2)}$ \and Josep Sardany\'es$^{(1)}$ \and   Daniel P\'erez-Palau$^{(3)*}$}
\title{Optimal dispersal and diffusion-enhanced robustness in two-patch \\ metapopulations: origin's saddle-source nature matters}
\date{August 2, 2023}
\begin{document}

\maketitle
{\small
\begin{itemize}
    \item[(1)] Centre de Recerca Matemàtica. Edifici C, Campus de Bellaterra, 08193 Cerdanyola del Vall\`es, Spain.  {\tt mjorba@crm.cat}, {\tt jsardanyes@crm.cat}
    \item[(2)] Universidad Nacional Autónoma de Honduras en el Valle de Sula. UNAH-VS, Boulevard UNAH-VS 21102 San Pedro Sula, Honduras. {\tt ruth.oliva@unah.edu.hn}
    \item[(3)] Escuela Superior de Ingeniería y Tecnología, Universidad Internacional de la Rioja, Av. La Paz 137, 26006 Logroño, Spain.
\end{itemize}
}
\begin{abstract}

A model of two-patch logistic metapopulation is investigated both analytically and numerically focusing on the impact of dispersal on population dynamics. First, 
the dependence of the global dynamics on the stability type of the full extinction equilibrium point is tackled. 
Then, the total population behaviour with respect to the dispersal is studied 
analytically. Our findings demonstrate that diffusion plays a crucial role in the preservation of
both subpopulations and the full metapopulation under the presence of stochastic perturbations. 
At low diffusion, the origin is a repulsor, causing the orbits to flow nearly parallel to the axes, risking stochastic extinctions. Higher diffusion turns the repeller into a saddle point. Orbits then quickly converge to the saddle's unstable manifold, reducing extinction chances. This change in the vector field enhances metapopulation robustness.
On the other hand, the well known fact that asymmetric conditions on the patches is beneficial 
for the total population is further investigated. This phenomenon has been studied in previous works 
for large enough or small enough values of the dispersal. In this work we complete the theory 
for all values of the dispersal. In particular, we derive analytically a formula for the optimal 
value of the dispersal that maximizes the total population. 

\end{abstract}


\noindent {\bf Keywords:} Dynamical systems; Bifurcations, Metapopulations, Theoretical Ecology, Stochastic extinctions


\pagestyle{myheadings}
\markboth{Optimal dispersal and diffusion-enhanced robustness in two-patch metapopulations (draft)}{Jorba-Cuscó, Oliva-Zúniga, Sardanyés, Pérez-Palau}

\section{Introduction}
The study of metapopulation theory has provided key results into the dynamics of species inhabiting fragmented populations (patches). The seminal seminal work by Levins~\cite{Levins1969,Levins1970} introduced the main models. Since then metapopulation models have been widely used to investigate the dynamics and persistence of fragmented populations under different scenarios~\cite{Clovert2012,Abbott2011}. The stability of metapopulations relies on the dynamics of the constituent subpopulations and their synchrony, where dispersal plays a significant role by affecting both subpopulation dynamics and such synchrony~\cite{Abbott2011}. Experimental evidence has confirmed the role of dispersal in mediating metapopulation stability~\cite{Ellner2001,Bonsall2002,Dey2006,Dey2012}. While some studies specifically examined the effects of dispersal rates on the dynamics and synchrony in single-species systems~\cite{Dey2006,Dey2012}, others focused on local extinctions with varying degrees of linkage between patches hosting multiple interacting species~\cite{Fahrig1985,Ellner2001,Bonsall2002,Ruiz-Herrera2018}.

Metapopulations are common and widespsread in both terrestrial and marine ecosystems, especially for species relying on dispersal to maintain interconnected populations. In terrestrial systems, some plant species in grasslands rely on seed dispersal to colonize new patches, and their populations go through cycles of local extinctions and recolonizations~\cite{Honnay2002}. The bog copper butterfly (\emph{Lycaena epixanthe}) inhabits wetlands and bogs, which are often isolated from one another, undergoing local extinctions and recolonizations within these isolated habitats~\cite{Hanski2003}. The Glanville fritillary butterfly (\emph{Melitaea cinxia}) inhabits a mosaic of meadows and patches of suitable habitat in Europe, and its populations experience regular extinctions and recolonizations~\cite{Hanski1994}. The spotted owl (\emph{Strix occidentalis}) in North America exhibits a metapopulation structure across its range, as it relies on different forest patches for nesting and foraging habitats. These forest patches are often separated by unsuitable habitats, leading to a fragmented distribution of the populations~\cite{Franklin2000}. More recently, the transient dynamics of a metapopulation of the colonial coastal bird Audouin's gull have been studied under the framework of nonlinear collective dispersal responses to biotic perturbations~\cite{Oro2023}. Metapopulations are also common in marine species such as fishes in estuaries and both rocky and coral reefs, seagrass, intertidal invertebrates, and coastal decapodes, among others~\cite{Marine_metapops2006}.

These previous examples, among many others, suggest that metapopulation theory can play a crucial role in delving into the dynamics of spatially-distributed populations having applications for conservation. 
Metapopulation dynamics have been studied with discrete-~\cite{Allen1993,Dey2014} and continuous-time~\cite{Levin74,Pulliam1998,Sardanyes2010,Sardanyes2019} models. For instance, the study of two Ricker maps with symmetric and asymmetric coupling showed that dispersal rates stabilized chaotic behaviour~\cite{Dey2014}.  In a similar direction, Allen and co-workers identified that chaotic dynamics provided robustness under local and global perturbations in coupled subpopulations modeled with logistic and Ricker maps~\cite{Allen1993}. Another model considering two-patch discrete models using coupled logistic maps studied the sensitive dependence on initial conditions for the basin of attraction of the periodic orbits, showing that chaos in one patch can be stabilized by dispersal from the other patch~\cite{Hastings1993}. Two-patch time-continuous models have been also thoroughly investigated (see e.g.,~\cite{Fang2020} and references therein). For instance, using two coupled logistic systems to study the total population for arbitrarily large dispersal rates and the impact of key parameters such as intrinsic growth rates and carrying capacities~\cite{Arditi2015}. The same system of coupled logistic models was later inspected by using the so-called balanced dispersal model instead of linear diffusion~\cite{Arditi2016}. The exploration of two-patch models with a generic growth function indicated some conditions of optimality also showing that the total population can be higher than the addition of the carrying capacities of each independent patch~\cite{Holt1985}. Two-patch models have been also investigated considering local autocatalytic growth~\cite{Sardanyes2010}.

Despite the intensive research on two-patch metapopulation models, such works often explored only a limited range of dispersal rates and focused on local dynamics. This approach hindered a comprehensive examination of potential interactions between dispersal rates and asymmetry reflected in different local dynamics within the subpopulations. For instance, some studies focused on homogeneous patches with symmetric dispersal \cite{GONZALEZANDUJAR1993,Gyllenberg1993NOV,Hastings1993,LLOYD1995}, while others allowed variations in the parameter determining dynamics between the subpopulations but maintained symmetric dispersal \cite{kendall1998}.
Conversely, certain studies examined asymmetric dispersal but limited their analysis to cases with identical qualitative dynamics in both subpopulations \cite{Doebeli1995,Ylikarjula2000}.  Other authors studied the interaction between several species in a rock-paper-scissor interaction \cite{Park2022, Wang2011}.
To gain a deeper understanding in metapopulation dynamics, it is necessary to investigate a broader range of local dynamics and dispersal rates in a systematic manner, considering their potential interactions with spatial heterogeneity and asymmetry in dispersal.
Moreover, most of these models lack results on global dynamics and have not considered stochastic perturbations, either intrinsic or extrinsic, in the overall metapopulation dynamics. As far as we know, few works have explored the interplay between dispersal, noise, and metapopulations' persistence, mainly providing numerical results in discrete-time models~\cite{Allen1993,Sardanyes2019}.

 This paper is organized as follows. In Section~\ref{sec:mamoda}, we present the model and its adimensionalization of units, which allows to reduce the system from five to three parameters without loss of generality. In subsequent subsections, we discuss the existence of equilibrium points and their stability, bifurcations and global dynamics. Most of the results regarding local dynamics were already known by the community (this is pointed out along the text). The main take-home message of these sections is that global dynamics are affected remarkably by the stability of the origin. We prove that a certain region of the phase space is foliated by 
heteroclinic connections between the global extinction and the coexistence equilibria if 
the origin is a source, but, if the origin is a saddle, there is a single heteroclinic 
connection connecting both equilibria. This phenomenology is further explored in Section~\ref{ssec:rhc}.
 In Section~\ref{sec:kicks} we provide numerical evidence on how the results identified in Sections~\ref{ssec:lgd} and~\ref{ssec:rhc} impact on the persistence of the metapopulation under stochastic population fluctuations. 
In Section~\ref{sec:odr} we conduct an analytical 
study of the dependence of the total population with respect to the dispersal. This problem 
has been addressed previously in the literature in the limit case when the dispersal 
tends to infinity (see \cite{Holt1985, Arditi2015} and references therein), and for 
a small values of the dispersal~\cite{Ruiz-Herrera2018}. In the infinite case, the authors showed that having a certain asymmetric
conditions on the patches lead to the total population to overcome the sum of the carrying 
capacities of both patches. In \cite{Ruiz-Herrera2018}, a different and less restrictive 
conditions were shown to be beneficial for small enough values of the dispersal. We here analyze
the general case (any value of the dispersal) integrating the previously studied hypotheses
in a unified theory. Moreover, we derive an analytic formula to compute the optimal value 
of the dispersal that maximizes the total population. Finally, Section~\ref{sec:con} is devoted to main conclusions.

\section{Mathematical model and dynamical aspects}\label{sec:mamoda}
We introduce the two-patch metapopulation model given by two logistic models coupled by  linear diffusion as the simplest way to model within-patch population dynamics and dispersal of individuals between patches. The model is given by the next couple of autonomous ordinary differential equations (ODEs):
\begin{equation}\label{eq:sistema_2D_previ}
\dot{x}_i = r_i \, x_i\left(1-\dfrac{x_i}{k_i}\right)+ D\cdot (x_j - x_i), \phantom{m} {\rm with} \phantom{m} i, j = 1,2;  \phantom{m} i\neq j.
\end{equation}

State variables $x_i$ represent the population numbers at patch $i$, $r_i$
the within-patch intrinsic growth rate, $k_i$ being the the carrying capacity at patch $i$. Parameter $D$ 
denotes the diffusion (dispersal) among patches, which is assumed to be symmetric and follow Fick's law. Let us consider the unit of the population as the carrying capacity of the first patch (dividing all the population variables by $k_1$) and the unit of time such that the rate of birth in the first patch is equal to 1 (dividing $r_i$ by $r_1$). This fixes $r_1=k_1=1$ without loss of generality considering dimensionless variables. We let the parameters $r_2$, $k_2$ and $D$ free for the present study. The role of the asymmetric configurations can be studied by fixing those free parameters bigger or smaller than 1.

In order to avoid cumbersome notation, we rename $r_2$ as $r$ and $k_2$ as $k$. With these modifications, the system reads as
\begin{equation}\label{eq:sistema_2D}
\begin{array}{rcl}
\dot{x}_1 &=&x_1\left(1-x_1\right)+ D\cdot (x_2 - x_1), \\[10pt]
\dot{x}_2 &=& r x_2  \left(1-\dfrac{x_2}{k}\right)+ D\cdot (x_1 - x_2).
\end{array}
\end{equation}
Notice that the parameters $k$ and $r$ are non-dimensional. As we have mentioned in the Introduction, this model has been largely studied. The motivation of this work is to obtain an analytical estimate of the dispersal rate optimizing the size of the population at the subpopulation level thus maximizing global density. Also, we will provide a detailed investigation of the dynamics close to the origin (involving full extinction) considering random fluctuations and seek how dispersal determines the fate of the population with low initial conditions in both patches and in the metapopulation under random fluctuations.

Let us now investigate the dynamics of Eqs.~\eqref{eq:sistema_2D}, focusing on the  global aspects of the phase space, namely, how trajectories of the system connect the origin and the coexistence equilibrium point. These analyses provide an important theoretical support of Section~\ref{sec:kicks}. Some of the results presented here are known but they are included here for completeness. For example, the existence of a coexistence equilibrium point was given in \cite{Freedman1977} for small values of the parameter $D$. The existence and local stability for any dispersal rate has been proved in 
several papers, as well as the global stability of the coexistence equilibrium~\cite{DeAngelis1979,Holt1985,Arditi2015}. 

\subsection{Local and global dynamics}\label{ssec:lgd}
We first inspect the number of equilibrium points for Eqs.~\eqref{eq:sistema_2D}. Since the vector field consists of two quadratic polynomials with degree two and no 
independent term, it is trivial to prove that the origin is always an equilibrium point and, by Bezout's
theorem, at most, there exist four equilibria. We now prove this elementary
result since it will be useful for further analysis that we will develop below.
\begin{lemma}[Number of equilibrium points]\label{lem:nueq}
    System~\eqref{eq:sistema_2D} has two, three or four equilibrium points.
\end{lemma}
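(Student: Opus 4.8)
The plan is to analyze the equilibrium equations directly, showing that the non-origin equilibria are the intersection points of two conics, and that exactly one of the branches is forced to contribute while the count is capped at four by Bézout (already noted), leaving the possibilities two, three, or four.

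First I would write out the equilibrium conditions for \eqref{eq:sistema_2D}:
\begin{equation*}
x_1(1-x_1) + D(x_2-x_1) = 0, \qquad r x_2\left(1-\frac{x_2}{k}\right) + D(x_1-x_2) = 0.
\end{equation*}
From the first equation, when $D>0$ one can solve for $x_2$ as an explicit quadratic function of $x_1$, namely $x_2 = \varphi(x_1) := x_1 + \frac{1}{D}\,x_1(x_1-1)$, a parabola through the origin and through $(1,1)$. Substituting into the second equation yields a single polynomial equation $P(x_1)=0$. Because $\varphi$ is quadratic and the second equation is quadratic in $x_2$ and linear in $x_1$, the resulting $P$ has degree $4$ in $x_1$; moreover $x_1=0$ is always a root (it returns the origin), so $P(x_1) = x_1\,Q(x_1)$ with $\deg Q = 3$. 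Thus besides the origin there are at most three further real equilibria, giving at most four in total, consistent with Bézout.

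Next I would establish the lower bound of two. The key point is that a cubic with real coefficients always has at least one real root, so $Q$ has at least one real zero $x_1^\ast$, producing at least one non-origin equilibrium $(x_1^\ast, \varphi(x_1^\ast))$. Hence the system always has at least two equilibria. (One should also note this equilibrium is genuinely distinct from the origin; this is where a small check is needed — one verifies $Q(0)\neq 0$, i.e. that $x_1=0$ is a simple root of $P$, which follows because the constant term of $Q$ is, up to a nonzero factor involving $r$, $D$, $k$, nonzero for the parameter ranges considered; in the degenerate case $D=0$ the system decouples into two independent logistic equations whose equilibria are $\{0,1\}\times\{0,k\}$, giving exactly four, so that case is handled separately and trivially.) Combining, the number of equilibria is at least two and at most four, hence two, three, or four.

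The main obstacle — really the only subtlety — is making the degree bookkeeping airtight: confirming that the substitution produces a degree-$4$ polynomial (not lower, and not that cancellations drop it unpredictably) and that $x_1=0$ factors out exactly once generically, so that the "at most four / at least two" bracket is tight and every value in between is a priori possible. I would handle this by exhibiting the leading and constant coefficients of $Q$ explicitly in terms of $r,k,D$ rather than expanding the whole polynomial, and by pointing to the $D=0$ decoupled case to show four is attained and to simple parameter choices (or a continuity/bifurcation argument referencing the later sections) to show two and three occur. No hard estimates are needed; the argument is elementary algebra plus the real-root property of odd-degree polynomials.
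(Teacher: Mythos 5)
Your proof follows essentially the same route as the paper's: for $D\neq 0$ isolate $x_2$ from the first equilibrium equation, substitute into the second to obtain a quartic in $x_1$ that factors as $x_1$ times a cubic, use the fact that a real cubic always has a real root for the lower bound, cap the count at four by degree (B\'ezout), and treat $D=0$ separately as the decoupled case with exactly four equilibria. The only inaccuracy is your side claim that the constant term of $Q$ is nonzero for all admissible parameters: up to a nonzero factor it equals $1+r-r/D$, which vanishes precisely on the curve $r=D/(1-D)$ (see Remark~\ref{rmk:mult}), so there $x_1=0$ is a double root of the quartic and the existence of a second, distinct equilibrium is instead guaranteed by the nullcline-crossing argument of Lemma~\ref{lem:RI} --- a degenerate case the paper's own proof also passes over, and which does not change the conclusion.
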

\begin{proof}
    Let us assume, first, that $D\neq 0$ (the case $D=0$ can be studied 
    separately). The nonlinear system of equations verified by the 
    equilibrium points is: 
    \begin{align*}
        x_{1}-x_{1}^{2}+D\cdot (x_{2}-x_{1})&=0,\\
        rx_{2}-\alpha x_{2}^{2}+D\cdot (x_{1}-x_{2})&=0.
    \end{align*}
    Here, $\alpha=r/k$.
    We isolate $x_2$ with respect to $x_1$ using the first equation: 
    \begin{equation}\label{eq:ailla}
    x_{2}=\frac{1}{D} x_{1}^{2} + \frac{D-1}{D}x_{1}.
    \end{equation}
    Now, we substitute $x_2$ in the second equation, obtaining: 
    \begin{equation}\label{eq:quartic}
    -\frac{\alpha}{D^2}x_{1}^{4}-\frac{2 \alpha (D-1)}{D^{2}}x_{1}^{3}+\left ( \frac{r}{D} - \frac{\alpha (D-1)^2}{D^2} -1\right ) x_{1}^{2} + \left ( 
    1 + r -\frac{r}{D}\right ) x_{1}=0.
    \end{equation}
    
    Obviously, $x_1 = 0$ is always a solution of this quartic. Substituting $x_1=0$ in~\eqref{eq:ailla}, we get $x_2 = 0$ and we recover the equilibrium point corresponding to the extinction 
    in both patches (the origin). We can factor the later equation as $x_{1} p(x_{1})=0$, where 
    $p$ is a cubic polynomial (which always has a real solution). Two additional 
    real solutions may appear depending on the parameters $r$, $k$, and $D$. 

    The system for $D=0$ has four equilibrium points given by $(0,0)$, $(1,0)$, $(0,k)$ and $(1,k)$. 
\end{proof} 
As we said, the existence of at least two equilibrium points was given in
\cite{Freedman1977,DeAngelis1979,Holt1985,Arditi2015}. One of those points is the origin and the other one is located at the interior of the first quadrant.
The typical argument to prove the existence of the later is to show that the two nullclines
of the system cross in the interior of the first quadrant (we reproduce the 
argument in Lemma~\ref{lem:RI}).

The behaviour of the coexistence equilibrium point when $D\to \infty$ is a recurrent 
issue being thoroughly discussed in references~\cite{Freedman1977,DeAngelis1979,Holt1985,Arditi2015}. These authors 
were interested in the limit case as a model for perfectly mixing conditions and, 
more precisely, in the sum of the coordinates of the coexistence point (namely, the 
numbers at which the metapopulation is stabilized). This aspect of system~\eqref{eq:sistema_2D} will be further explored in Section~\ref{sec:odr}. For the moment being, we let the following 
remark, stating that the limit of the total population can be obtained from the quartic 
expression given by~\eqref{eq:quartic} recovering the results of previous papers in an alternative way. Before analyzing the limit cases, numerical results displaying the interior equilibrium points are shown in Fig.~\ref{fig:2D_SO} together with some orbits in phase portraits. These orbits actually provide information on how they move away from the origin and from the axes. This behaviour, as we discuss below, plays a crucial role in the fate of the species and the entire metapopulation under stochastic perturbations (see Section~\ref{sec:kicks}). The equilibrium values of the species at each patch are also displayed for different values of $D$ and $r$ in Fig.~\ref{fig:pps}.

\begin{remark}[Limit case]\label{rmk:lc}
    If $D \to \infty$, the quartic~\eqref{eq:quartic} converges uniformly to the 
    polynomial 
    $$-(\alpha+1)x_{1}^{2} + (1+r)x_{1}.$$
    Its non-trivial root is 
    $$ x_1 = \frac{1+r}{\alpha+1} =\frac{k+kr}{r+k}.$$
    On the other hand, Eq.~\eqref{eq:ailla} implies that 
    $$\lim_{D \to \infty } x_{2} (D)= \lim_{D\to \infty} x_{1}(D).$$
    It holds that, 
    \begin{equation}\label{eq:lc}
    \lim_{D\to \infty} \left ( x_{1}(D) + x_{2}(D) \right)= 2 \frac{k+kr}{r +k}.
    \end{equation}
\end{remark}

 \begin{figure}[t!]
     \centering
     \includegraphics[width=0.7\textwidth]{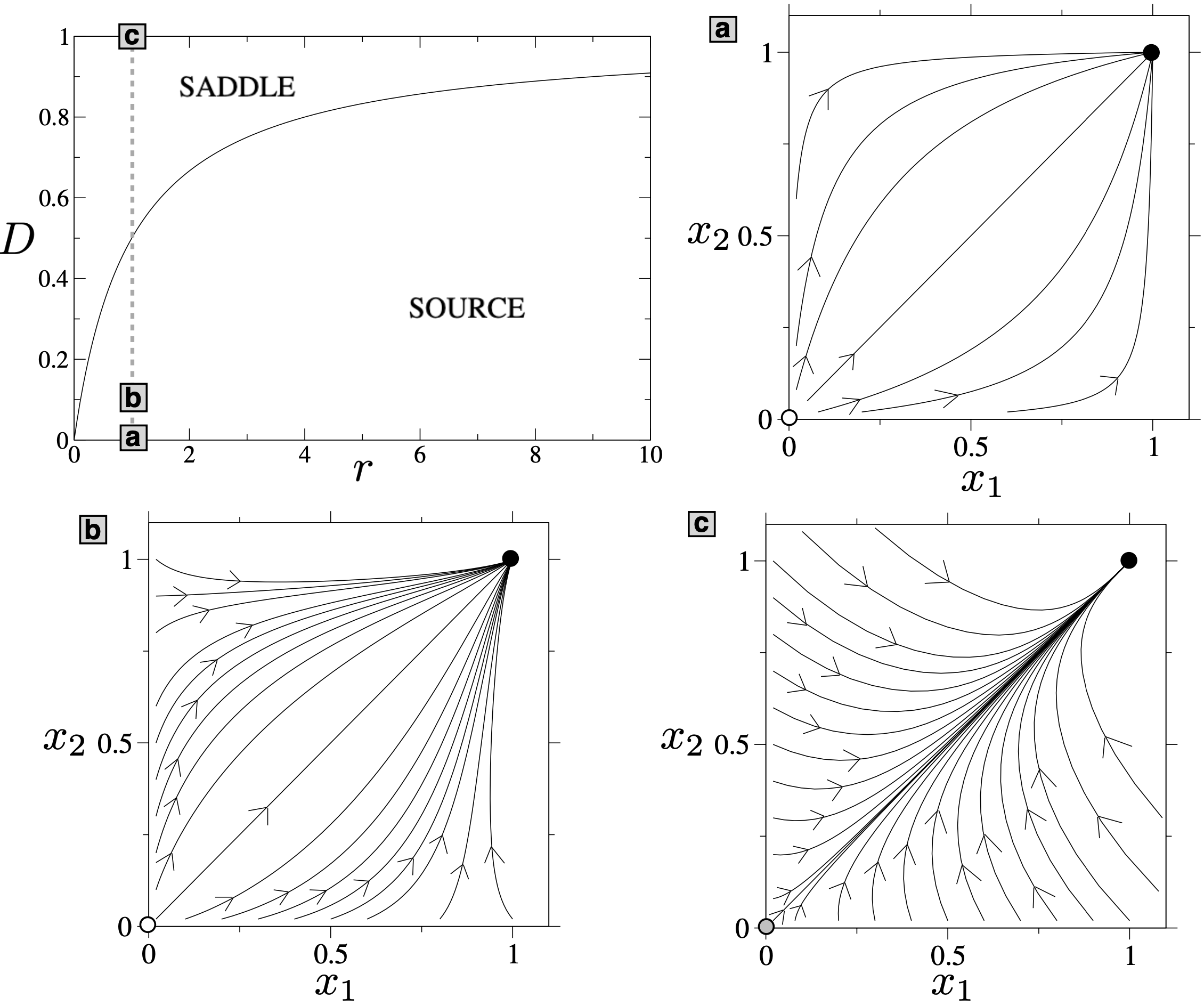}
     \caption{{\small Stability of the origin for Eqs.~\eqref{eq:sistema_2D} in the plane $(r, D)$, where the curve $D = r/(1+r)$ separates the source from the saddle behaviour. Several phase portraits are shown at increasing values of diffusion for $r = k = 1$ and: (a) $D=0$, (b) $D=0.1$, and (c) $D=1$. The arrows indicate the direction of the orbits. Stable equilibrium points are shown with black circles, while the saddle and the repeller are displayed with gray and white circles, respectively.}}
     \label{fig:2D_SO}
 \end{figure}
In \cite{DeAngelis1979}, the stability of the origin is explored. There, the authors 
proved that there always exist a repelling direction and that the origin is never a sink. Thus, extinction can never be achieved under the deterministic setting modeled with Eqs.~\eqref{eq:sistema_2D}. From 
Eq.~\eqref{eq:ailla} it is clear that the multiplicity of the origin as an equilibrium 
point is the same as the multiplicity of $x_1 = 0$ as a root of the quartic~\eqref{eq:quartic}.
\begin{remark}[Multiplicity of zero]\label{rmk:mult}
If $r=D/(1-D)$, zero has multiplicity two as solution of the 
quartic Eq.~\eqref{eq:quartic}. If, besides, $k=1/r^{2}$, 
it has multiplicity three. 
\end{remark}

\begin{lemma}[Bifurcation of the origin]\label{lem:ori}
    The origin is a repelling node for $r<D/(1-D)$ and a saddle for $r>D/(1-D)$. 
\end{lemma}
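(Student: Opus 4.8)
The plan is to linearize system~\eqref{eq:sistema_2D} at the origin and read off the stability type from the eigenvalues of the Jacobian matrix. Since both nonlinear terms ($x_1^2$ and $r x_2^2/k$) vanish to second order at the origin, the linear part is governed entirely by the growth and diffusion terms. First I would compute the Jacobian $J$ of the vector field at $(0,0)$. Dropping the quadratic terms, the linearization is
\begin{equation*}
J = \begin{pmatrix} 1 - D & D \\ D & r - D \end{pmatrix}.
\end{equation*}
This is the key object; everything else is extracting sign information from its spectrum.

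Next I would analyze the eigenvalues via the trace and determinant. The trace is $\operatorname{tr} J = (1-D)+(r-D) = 1 + r - 2D$ and the determinant is
\begin{equation*}
\det J = (1-D)(r-D) - D^2 = r - D - rD = r(1-D) - D.
\end{equation*}
The crucial observation is that $\det J = 0$ precisely when $r(1-D) = D$, i.e. when $r = D/(1-D)$ (assuming $D \neq 1$, which holds in the relevant regime since $D$ is small enough for $D/(1-D)$ to be a meaningful threshold). Because $J$ is a symmetric matrix, both eigenvalues are automatically real, so the origin is always a node or a saddle — never a focus. The saddle-versus-node dichotomy is then governed entirely by the sign of $\det J$: the origin is a saddle exactly when $\det J < 0$, i.e. $r(1-D) - D < 0$, which rearranges to $r < D/(1-D)$ when $1-D > 0$. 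Conversely $\det J > 0$ (both eigenvalues of the same sign) occurs for $r > D/(1-D)$.

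The remaining step is to pin down that when $\det J > 0$ the common sign of the eigenvalues is positive, so that the node is repelling rather than attracting. This follows from the trace: one checks that whenever $\det J > 0$ one also has $\operatorname{tr} J > 0$, so both eigenvalues are positive and the origin is a repelling node. This is consistent with the already-cited result of~\cite{DeAngelis1979} that the origin always possesses a repelling direction and is never a sink; indeed, in the saddle case ($\det J < 0$) the eigenvalues have opposite signs, again giving a repelling direction, and in the node case both are positive, so extinction is never asymptotically stable either way. The main obstacle — and it is a mild one — is correctly tracking the sign of $1-D$ when dividing the inequality $r(1-D) < D$ by $(1-D)$ to obtain $r < D/(1-D)$: one must confirm that the threshold regime corresponds to $D < 1$ so the inequality direction is preserved; outside that range the classification must be interpreted in terms of the sign of $\det J$ directly. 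Apart from this bookkeeping, the proof reduces to the elementary trace–determinant computation above.
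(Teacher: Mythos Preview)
Your trace--determinant approach is a perfectly valid alternative to the paper's explicit eigenvalue computation, and your formula $\det J = r(1-D)-D$ is correct. The problem is that the conclusion you reach contradicts the statement you set out to prove: you establish that the origin is a saddle when $\det J<0$, i.e.\ when $r<D/(1-D)$, and a repelling node when $r>D/(1-D)$ --- exactly the reverse of what the lemma asserts. You never flag this discrepancy.

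In fact your conclusion is the correct one, and the lemma as printed has the two inequalities swapped. This is easy to confirm: for $r=1$ the eigenvalues of $J$ are $1$ and $1-2D$ (as the paper itself records in Section~\ref{ssec:rhc}), so the origin is a source for $D<1/2$, equivalently $r>D/(1-D)$, and a saddle for $D>1/2$. Theorem~\ref{thm:gd} likewise places the source regime at $D<r/(1+r)$, which is again $r>D/(1-D)$. The paper's own proof of the lemma computes $\lambda_\pm$ explicitly and then asserts that $\lambda_-<0$ for $r>r^*$; squaring both sides of $1+r-2D<\sqrt{4D^2+(1-r)^2}$ shows that this claim is also reversed, so the sign error in the statement simply propagated into the proof. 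Had you compared your final inequalities with the lemma you would have caught this.

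One minor loose end: you assert that $\det J>0$ forces $\operatorname{tr}J>0$ without checking it. The cleanest justification is to note that $\sqrt{4D^2+(1-r)^2}\geq 2D$, whence $\lambda_+\geq\tfrac12(1+r)>0$ for all parameter values; once one eigenvalue is guaranteed positive, $\det J>0$ immediately forces the other to be positive as well, giving the repelling node.
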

\begin{proof}
    The Jacobian matrix of the vector field evaluated at the point $(0,0)$ is given by: 
    $$
    J(0,0)=\begin{pmatrix}
1-D & D \\
D & r-D
    \end{pmatrix}. 
    $$
    An elementary computation shows that the eigenvalues are: 
    $$\lambda_{\pm}=\frac{1}{2}\left( 1- 2D +r \pm \sqrt{4D^{2} + (1-r)^2 } \right) .$$
    Notice that $\lambda_{+}$ is always positive. On the other hand, $\lambda_{-}$ is 
    negative for $r>r^{*}:=D/(1-D)$ and positive for $r<r^{*}$. Therefore, the origin 
    changes from repelling node to saddle as $r$ crosses the curve $r*$ from above 
    (see Fig.~\ref{fig:2D_SO}). 
    The type of bifurcation is determined by the multiplicity of the origin as 
    equilibrium point (see Remark~\ref{rmk:mult}): If the zero has multiplicity 
    two, the bifurcation is transcritical while if it has multiplicity three, 
    it is, generically, a pitchfork. 
\end{proof}

      \begin{figure}
          \centering
          \includegraphics[width=0.75\textwidth]{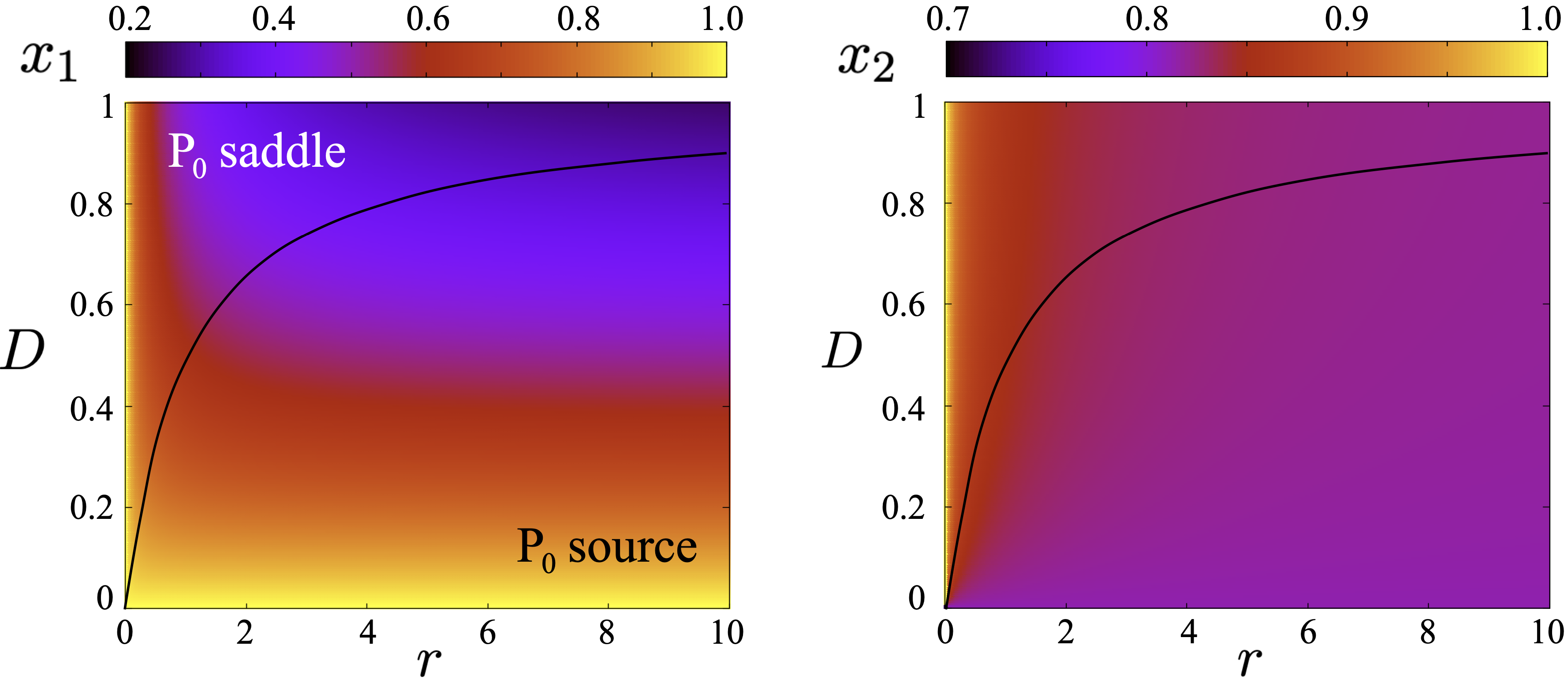}
          \caption{{\small Interior equilibrium of Eqs.~\eqref{eq:sistema_2D} in the parameter space $(r,D)$ for $x_1$ (left) and $x_2$ (right) computed numerically with $k=0.5$ and $x_1(0) = 0.45$ and $x_2(0) = 0.2$. Overlapped we display the curve separating the source from the saddle of the origin shown in Fig.~\ref{fig:2D_SO}.}}\label{fig:pps}
  \end{figure} 

The following result establishes that the nullclines enclose a region of the 
phase space which is positively invariant by the flow. This is sketched in Fig.~\ref{fig:sketch_ISO}.

 \begin{figure}[t!]
         \centering
         \includegraphics[width=0.35\textwidth]{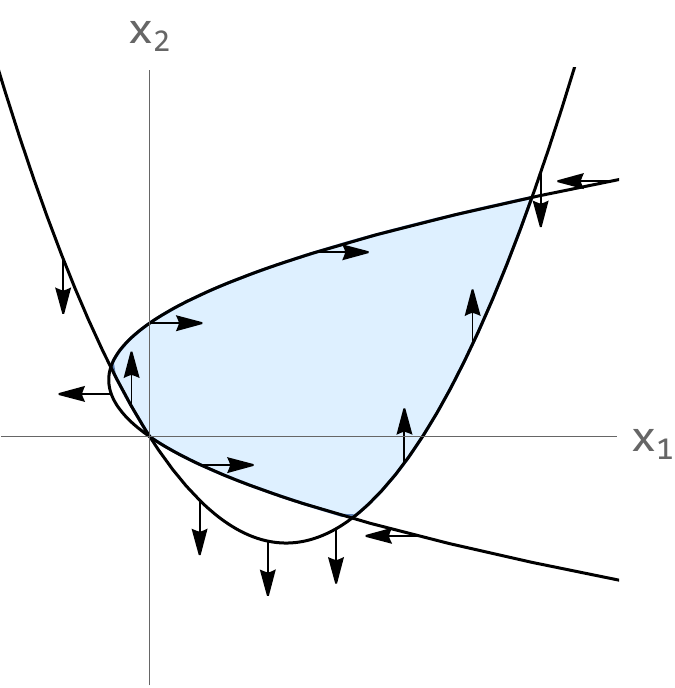}
         \hspace{0.3 cm}
         \includegraphics[width=0.35\textwidth]{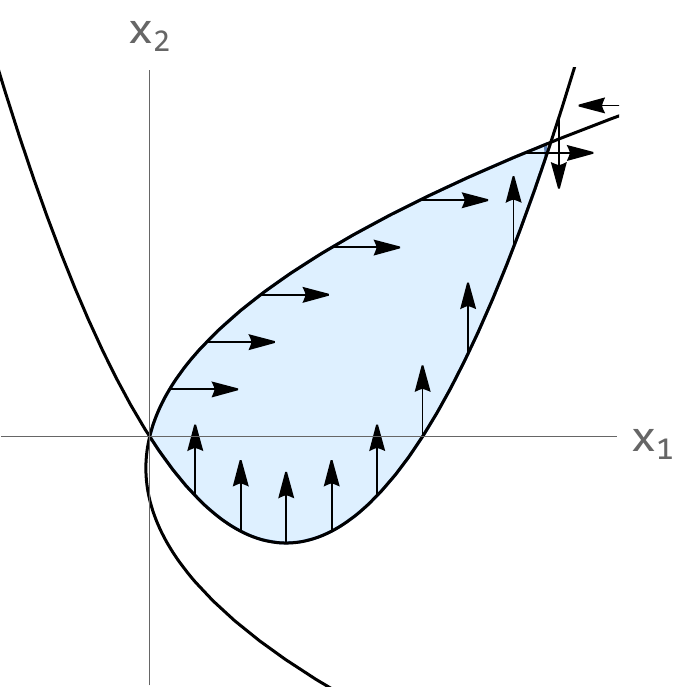}
         \caption{{\small Sketch of the positive invariant region (blue) defined by the vertical and 
         horizontal nullclines. Left: $r=1 >D$. Right: $r=0.3 <D$. In both cases  $D=0.4$, $k=0.4$. Observe that the blue region is surrounded by the isoclines and their direction points inwards.}}
         \label{fig:sketch_ISO}
 \end{figure}
 
\begin{lemma}[Intersection of the nullclines]\label{lem:RI}
For any positive values of $r$, $k$ and $D$, 
the horizontal and vertical nullclines define a positively invariant region. 
\end{lemma}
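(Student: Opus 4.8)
The plan is to construct an explicit bounded region $\mathcal{R}$ in the open first quadrant whose boundary is built from pieces of the two nullclines, and then verify that on every boundary segment the vector field points into $\mathcal{R}$ (or is tangent while still not allowing escape). Recall that the vertical nullcline $\{\dot x_1 = 0\}$ is the parabola $x_2 = \tfrac1D x_1^2 + \tfrac{D-1}{D}x_1$ from Eq.~\eqref{eq:ailla}, and the horizontal nullcline $\{\dot x_2 = 0\}$ is, symmetrically, $x_1 = \tfrac{k}{rD}x_2^2 + \tfrac{D-r}{rD}x_2$ (solving the second equation of~\eqref{eq:sistema_2D} for $x_1$). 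Both curves pass through the origin, and both are (eventually) increasing and unbounded; the key geometric fact, visible in Fig.~\ref{fig:sketch_ISO}, is that for $x_1, x_2$ large the vertical nullcline lies below/right of the diagonal behaviour while the horizontal one lies above/left, so the two curves re-intersect at the coexistence equilibrium, and together they bound a "lens"-shaped region touching the origin.

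First I would pin down the boundary of $\mathcal{R}$ precisely: let $P^*=(x_1^*,x_2^*)$ be the interior coexistence equilibrium (its existence follows from the two nullclines crossing there, which is exactly the standard argument alluded to right before the lemma). Then $\partial\mathcal{R}$ consists of the arc of the vertical nullcline from the origin to $P^*$ together with the arc of the horizontal nullcline from the origin back to $P^*$. On the vertical-nullcline portion we have $\dot x_1 = 0$, so the flow is purely vertical there, and I must check the sign of $\dot x_2 = rx_2(1-x_2/k) + D(x_1-x_2)$ along that arc and confirm it has the correct sign (pointing into $\mathcal{R}$) on the whole arc between the two equilibria — equivalently, that this arc does not meet the horizontal nullcline except at its two endpoints. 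Symmetrically, on the horizontal-nullcline portion the flow is purely horizontal and I check the sign of $\dot x_1 = x_1(1-x_1)+D(x_2-x_1)$. Finally, at the origin itself the region has a corner; here I would invoke Lemma~\ref{lem:ori}: whether the origin is a repelling node or a saddle, in either case the local phase portrait shows the flow leaving the origin \emph{into} the first quadrant (the relevant eigendirections have positive components), so no trajectory starting in $\mathcal{R}$ can exit through the corner. Combining the three boundary checks with the Bony–Brezis / Nagumo tangency criterion (or simply the observation that a trajectory touching $\partial\mathcal{R}$ is pushed back in) gives positive invariance.

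**The main obstacle** I expect is the sign verification along the nullcline arcs: one must show that between the origin and $P^*$ the two nullclines do not cross, so that $\dot x_2$ keeps a constant sign on the vertical-nullcline arc (and $\dot x_1$ on the horizontal-nullcline arc). This is where the parabolic shape matters — substituting the parabola~\eqref{eq:ailla} into $\dot x_2$ yields, up to a positive factor, a cubic in $x_1$ whose roots are exactly the $x_1$-coordinates of the equilibria on that nullcline; one then argues from Lemma~\ref{lem:nueq} and a degree/endpoint-sign count that there is no additional root strictly between $0$ and $x_1^*$, so the sign is determined and is the inward one. A clean way to organize this is to note that $\dot x_2$ restricted to the vertical nullcline is proportional to $x_1\,\widetilde p(x_1)$ where $\widetilde p$ is the cubic $p$ from the proof of Lemma~\ref{lem:nueq}; evaluating $\widetilde p$ at $0^+$ and at $x_1^*$ (or at a convenient large value) fixes the sign on $(0,x_1^*)$. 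The rest — computing the two nullcline formulas, checking that the flow on each is purely transverse to it, and reading off the corner behaviour from the already-established Jacobian at the origin — is routine.
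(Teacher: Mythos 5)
There is a genuine gap, and it sits exactly where you placed your ``main obstacle''. Your region is bounded \emph{only} by the two nullcline arcs joining the origin to $P^*$, and your plan needs (i) that these arcs meet the other nullcline only at their endpoints and (ii) that the sign of $\dot x_2$ (resp.\ $\dot x_1$) along them, read off from the cubic of Lemma~\ref{lem:nueq} near $0^+$, is the inward one. Both claims fail whenever $D<r/(1+r)$, i.e.\ precisely when the origin is a source (cf.\ Theorem~\ref{thm:gd} and Fig.~\ref{fig:2D_SO}). Indeed, restricted to the vertical nullcline~\eqref{eq:ailla}, $\dot x_2$ is exactly the left-hand side of the quartic~\eqref{eq:quartic}, i.e.\ $x_1\widetilde p(x_1)$ with $\widetilde p(0)=1+r-\tfrac{r}{D}$; for $D<r/(1+r)$ this is negative, so for small $x_1>0$ the flow on that arc points \emph{downwards}, out of your region. (Note that for $D<\min(1,r)$ both parabolas have negative slope at the origin, so near the origin your two boundary arcs lie in the fourth and second quadrants — your $\mathcal R$ is not ``in the open first quadrant'' as announced.) Concretely, for $r=1$, $k=0.4$, $D=0.4$ (the parameters of Fig.~\ref{fig:sketch_ISO}, left) the point $(0.1,-0.125)$ lies on the vertical nullcline and there $\dot x_2\approx -0.074<0$, so trajectories exit. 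Moreover, since $\dot x_2=D(1-D)>0$ at the point $(1-D,0)$ of the same arc, $\widetilde p$ changes sign on $(0,1-D)$: the nullclines cross again at a point with $x_2<0$ strictly between the origin and $P^*$ (one of the extra equilibria allowed by Lemma~\ref{lem:nueq}), contradicting (i). So the constant-sign count cannot be carried out, and in fact the purely nullcline-bounded lens is simply not positively invariant in this regime; also, evaluating $\widetilde p$ at $x_1^*$ or at a large value tells you nothing about the sign on $(0,x_1^*)$, since $\widetilde p(x_1^*)=0$ and the sign beyond $x_1^*$ is the opposite one.

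The missing ingredient — and what the paper's proof actually checks — is to cut the region off at the coordinate axes: the invariant region has boundary made of the nullcline arcs inside the first quadrant \emph{together with} the axis segments from the origin to $(1-D,0)$ and to $\bigl(0,(r-D)/\alpha\bigr)$ (equivalently, one works with $\mathcal I^{+}$ of Remark~\ref{rmk:pir}). On those segments $\dot x_2=Dx_1>0$ and $\dot x_1=Dx_2>0$, so the field enters the first quadrant, and on the remaining nullcline arcs (beyond $(1-D,0)$, resp.\ $\bigl(0,(r-D)/\alpha\bigr)$) your cubic-sign argument does go through, because the troublesome extra root lies in $(0,1-D)$. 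Two smaller points: your formula for the horizontal nullcline is garbled — it should read $x_1=\tfrac{r}{kD}x_2^2+\tfrac{D-r}{D}x_2$ — and the existence of the interior crossing $P^*$ is part of what this lemma is meant to establish (the paper proves it via Bolzano, comparing the $\sqrt{x_1}$-type growth of one nullcline with the $x_1^2$ growth of the other, treating $r>D$ and $r<D$ separately), so it should not be deferred to a ``standard argument'' cited elsewhere.
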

\begin{proof}
    The vertical nullcline, $\frac{d}{dt}x_{1}=0$, is described by the parabola: 
    $$x_{2}=g_{v}(x_1):=\frac{1}{D} \left( x_{1}^{2} -(1-D)x_{1} \right). $$
    The horizontal nullcline, $\frac{d}{dt}x_{2}=0$, is given by: 
    $$x_{1}=g_{h}(x_2):=\frac{1}{D}\left (\alpha x_{2}^{2} -(r-D)x_{2} \right ). $$
    Notice that the vertical nullcline crosses the horizontal axis at $N_1=((1-D), 0)$ while 
    the horizontal nullcline crosses the vertical axis at $N_2=(0, (r-D)/\alpha )$. It holds that, 
    $$\frac{d}{dx_{1}} g_{v} (N_1 ) = (D-D^2) ,$$
    and 
    $$\frac{d}{dx_{2}} g_{h} (N_2 ) = \frac{D(-D+r)}{\alpha}.$$
    
  The two parabolas intersect in a point contained in the 
        first quadrant. Indeed, if $r>D$ (see Fig.~\ref{fig:sketch_ISO} left), at the vertical axis, the horizontal 
        nullcline $g_{h}$ is above the vertical one $g_v$. Moreover, it behaves as $\sqrt{x_1}$
        when $x_1$ is large. On the other hand, the vertical nullcline 
        behaves as $x_{1}^{2}$. A direct application of the Bolzano theorem 
        shows that there exist a crossing point for $x_1 > 0$ and $x_2 > 0$.
        If $r<D$ (see Fig.~\ref{fig:sketch_ISO} right), the horizontal nullcline crosses the vertical 
        axis for $x_2=0$ and for a negative value of $x_2$. At the origin, 
        the slope of the horizontal nullcline is positive and the slope 
        of the vertical nullcline is negative. Therefore, near zero, the 
        horizontal nullcline is above the vertical one and the same argument 
        can be used to establish the existence of the crossing 
        point at the first quadrant. 
        
        From the fact that the two nullclines always cross, we get 
        a region that separates the phase space. Let us see that this region 
        is positively invariant. If $r>D$ the first component of the vector field is positive along 
        the vertical axis for values of $0< x_2 < (r-D)/\alpha$ while 
        the second component of the vector field is positive along the horizontal axis for $x_1 < (1-D)$. This, together with the fact that the parabolas 
        are nullclines establishes the positive invariance of the region. 
        If $r<D$ the result follows from the second component of the vector field 
        being positive along the horizontal axis for $x_1 < (1-D)$.
       
\end{proof}



\begin{remark}[Positively invariant region]\label{rmk:pir}
    The first quadrant is also positively invariant. The region enclosed 
    by the nullclines is not contained in it (see Fig.~\ref{fig:sketch_ISO}). For 
    the next result we shall focus on the intersection between the region enclosed 
    by the isoclines and the first quadrant. This intersection, let us
    name it $\mathcal{I}^{+}$, is also positively invariant. 
\end{remark}

\begin{remark}[Unstable eigenvector of the origin]\label{rmk:vepini}
The eigenvector associated to the $\lambda_+$ eigenvalue always points towards $\mathcal{I}^{+}$. Indeed, the eigenvector associated to $\lambda_+$ is given by:
$$
v_+=\left( \dfrac{1 - r +\sqrt{(1-r)^2 + 4 D^2}}{2 D}, 1 \right)
$$
Observe that the first component of $v_+$ is always positive, therefore, pointing to the first quadrant. If $r>D$, the isoclines crosses the axis for positive values. Therefore, always point to $\mathcal I^+$. If $r<D$, then the slope of the eigenvector must be between the slopes of $g_v$ and $g_h$:


$$\frac{D-1}{D}<\dfrac{2 D}{1 - r +\sqrt{(1-r)^2 + 4 D^2}}<\frac{D}{D-r}.$$
Both inequalities are true while $$r<D.$$
\end{remark}

\begin{theorem}[Global dynamics]\label{thm:gd}
    The following sentences hold for system~\eqref{eq:sistema_2D}:
    \begin{enumerate}
    \item There are no periodic orbits. 
    \item It has an equilibrium point ($P_3$) inside the first quadrant that is Globally Asymptotically Stable (GAS) in the first quadrant.
    \item If $D<r/(1+r)$, there exist infinitely many heteroclinic connections between the origin ($P_0$) and $P_3$.
        \item If $D>r/(1+r)$ there is a unique heteroclinic connection between $P_0$ and $P_3$.
    
    \end{enumerate}
\end{theorem}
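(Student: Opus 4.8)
The plan is to establish the four claims in order, leveraging the structure already built up. For claim (1), the natural approach is Dulac's criterion (or Bendixson–Dulac) on the first quadrant. I would try a Dulac multiplier of the form $B(x_1,x_2) = x_1^{a} x_2^{b}$ and compute $\partial(B\dot x_1)/\partial x_1 + \partial(B\dot x_2)/\partial x_2$, choosing $a,b$ (likely $a=b=-1$, in the spirit of classical logistic/competition arguments) so that this divergence has a constant sign on the open first quadrant. Since by Remark~\ref{rmk:pir} the first quadrant is positively invariant and trajectories stay bounded (the nullclines bound the dynamics), ruling out closed orbits there suffices; one must also note that no periodic orbit can cross the axes since the axes are invariant. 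The main technical care is checking that the chosen multiplier genuinely produces a sign-definite divergence for all positive $r,k,D$.

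For claim (2), existence of an interior equilibrium $P_3$ is already given by Lemma~\ref{lem:RI} (the nullclines cross in the first quadrant); uniqueness in $\mathcal{I}^+$ together with global attraction then follows from the Poincaré–Bendixson theorem applied to the positively invariant compact region $\mathcal{I}^+$: every forward orbit starting in the first quadrant enters $\mathcal{I}^+$ (by the invariance and the direction of the vector field on $\partial \mathcal{I}^+$), its $\omega$-limit set is nonempty, compact, connected, contains no periodic orbit by claim (1), so it must be an equilibrium; since $P_0$ is never a sink (Lemma~\ref{lem:ori}) and is in fact unstable transversally, the only possible $\omega$-limit is an interior equilibrium, which forces uniqueness of the interior equilibrium and its global asymptotic stability. (Alternatively one can cite the monotone-systems structure: on the first quadrant the system is cooperative since the off-diagonal Jacobian entries $D$ are positive, so the flow is monotone and convergence to equilibrium is automatic — this would streamline (1) and (2) simultaneously.)

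For claims (3) and (4), the key object is the local stable/unstable manifold structure at the origin. When $D<r/(1+r)$, Lemma~\ref{lem:ori} says $P_0$ is a repelling node: both eigenvalues positive. Then a full neighborhood of $P_0$ in $\mathcal{I}^+$ is swept out by orbits emanating from $P_0$, and by claim (2) each such orbit converges to $P_3$; distinct orbits through distinct points near $P_0$ give distinct heteroclinic connections, hence infinitely many. (One should note the node is generically not a star node, so there is a distinguished strong/weak splitting, but every orbit still limits to $P_0$ backward in time.) When $D>r/(1+r)$, $P_0$ is a saddle with a one-dimensional unstable manifold $W^u(P_0)$; the unstable eigenvector points into $\mathcal{I}^+$ by Remark~\ref{rmk:vepini}, so the relevant branch of $W^u(P_0)$ enters $\mathcal{I}^+$ and, by claim (2), converges to $P_3$, giving one heteroclinic connection. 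For uniqueness: any heteroclinic orbit from $P_0$ must leave $P_0$ along $W^u(P_0)$ (an orbit converging to a saddle in backward time must lie on its unstable manifold), and $W^u(P_0)$ has exactly two branches, only one of which lies in the first quadrant; hence the connection is unique.

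The main obstacle I anticipate is claim (1): finding a single Dulac function that works uniformly in all three parameters may require either a clever multiplier or a case split, and one must be careful about the boundary behavior of $B$ on the axes when using it on the closed quadrant. If the Dulac approach resists, the fallback is the cooperative/monotone-systems argument, which bypasses periodic orbits entirely but requires invoking the theory of monotone flows (Hirsch) rather than an elementary computation — a trade-off between self-containedness and brevity that I would resolve in favor of whichever the rest of the paper's tone supports.
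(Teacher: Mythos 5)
Your overall architecture is sound and, for items (2)--(4), runs essentially parallel to the paper's argument: the paper also deduces global attraction of $P_3$ from the positively invariant region $\mathcal{I}^{+}$ plus the absence of periodic orbits, obtains the infinitely many connections in the source case from the fact that every orbit near a repelling node tends to $P_0$ backwards in time and to the GAS point $P_3$ forwards, and in the saddle case identifies the unique connection with the branch of $W^{u}(P_0)$ that enters $\mathcal{I}^{+}$ (Remark~\ref{rmk:vepini}); your phrasing of uniqueness via ``an orbit with $\alpha$-limit $P_0$ must lie on $W^{u}(P_0)$'' is in fact a cleaner justification than the paper's. Where you genuinely differ is item (1): the paper rules out periodic orbits by an index-type argument (a closed orbit in the first quadrant would have to encircle the nullcline intersection, hence cross the positively invariant region bounded by the nullclines), whereas you propose Bendixson--Dulac. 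Your guess $B=(x_1x_2)^{-1}$ does work uniformly in the parameters: one gets
\begin{equation*}
\frac{\partial}{\partial x_1}\bigl(B\dot{x}_1\bigr)+\frac{\partial}{\partial x_2}\bigl(B\dot{x}_2\bigr)
=-\frac{1}{x_2}-\frac{D}{x_1^{2}}-\frac{r}{k\,x_1}-\frac{D}{x_2^{2}}<0
\end{equation*}
on the open first quadrant, so no case split is needed. This buys an explicit, parameter-uniform computation in place of the paper's more geometric (and terser) invariance argument; the cooperative-system alternative you mention would also streamline (1) and (2) at the cost of invoking monotone-flow theory.

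Two slips to repair. First, the axes are \emph{not} invariant when $D>0$ (on the positive $x_1$-axis one has $\dot{x}_2=Dx_1>0$); the correct statement is that the closed first quadrant is positively invariant because the flow points inward on the axes, and this still yields what you need: a periodic orbit meeting the quadrant would be trapped in it by positive invariance, hence lie entirely in the (simply connected) region where Dulac applies. Second, in the Poincar\'e--Bendixson step you pass from ``no periodic orbits'' to ``the $\omega$-limit is an equilibrium'', omitting the graphic/heteroclinic-cycle alternative and not checking that no orbit of the open quadrant can have $P_0$ as its $\omega$-limit when $P_0$ is a saddle; both are settled by observing that the stable eigenvector $v_-$ of $P_0$ has negative first component, so $W^{s}(P_0)$ leaves the first quadrant and no cycle of connections through $P_0$ (the only boundary equilibrium for $D>0$) can occur. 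These are one-line fixes, and the paper's own proof is no more detailed on these points, but as written your text asserts one false fact (axis invariance) that should not survive into a final version.
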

\begin{proof}
We shall prove the statements by order.
    \begin{enumerate}
        \item  We have shown that the two nullclines cross at a point in the first quadrant.
         A periodic orbit in the first quadrant should encircle the crossing points of the 
         vertical and horizontal nullclines. But this would contradict the fact that they are 
         positively invariant. 
        \item We discuss first the case $D=0$. It is elementary to show that it has four 
        equilibrium points, $(0,0)$ being a repelling node, $(1, 0)$, $(0,k)$ being saddle points; 
        and $(1,k)$ being an attracting node. The lines $x_1 = 0$ and $x_1 =1 $ are vertical nullclines 
        and, similarly, $x_2 =0$ and $x_2=1$ are horizontal nullclines. Therefore, the rectangle 
        defined by the four equilibrium points is positively invariant. This means that there are 
        no periodic orbits around non of the equilibria and, therefore, the $\omega$-limit for a 
        full measure set of initial conditions is $(1,k)$. 
        For the case $D\neq 0$ we argue that, because of the positively invariant 
        region defined by the nullclines, the coexistence point has to be an attractor. Finally, 
        since there are no periodic orbits (in the first quadrant), it is a global attractor.
        \item If $P_0$ is a source, any initial condition on $\mathcal{I}^{+}$ with $\alpha$-limit
        $P_0$ must have $P_3$, which is GAS, as $\omega$-limit.
        \item If $P_0$ is a saddle, the unstable manifold has $P_3$ as $\omega$-limit. Any other initial condition in $\mathcal{I}^{+}$ has its $\alpha$-limit outside the first quadrant.
        
    \end{enumerate}
\end{proof}

\subsection{On the role of the heteroclinic connection}\label{ssec:rhc}
In this section we study the fact that the existence of a unique heteroclinic connection effects the global dynamics. This is done by studding the attracting character of the connection.

Lemma~\ref{lem:ori} establishes that the stability type of the origin changes from a repelling node (a source) to a saddle point (as $D$ crosses a certain critical value depending on $r$). This phenomenon has a significant impact on the dynamics near the origin, and as we will see below, will play a central role in the robustness of the metapopulation under stochastic fluctuations. Indeed, 
by Hartman's theorem, the linearized system  provides an approximation on the trajectories 
nearby. When the origin is a source, every initial condition in the positively invariant region lies on a heteroclinic connection between the origin and the coexistence equilibrium point. Additionally, two initial conditions that are very close to the origin tend to separate from each other exponentially over time.
Conversely, when the origin is a saddle point, there is a single locally attracting heteroclinic connection. In this case, nearby solutions to the origin converge rapidly to its unstable manifold and then follow the heteroclinic connection towards the coexistence equilibrium point.

In order to better understand this phenomenon, we restrict ourselves to the symmetric case with $r=1$ and $k=1$, 
 which albeit being the simplest case gathers all the elements playing 
a role in this scenario. 
When $r=1$, the bifurcation of the origin occurs at $D=1/2$. At this value of $r$, 
the eigenvalues of the Jacobian matrix evaluated at the origin are 
$$\lambda_{1}=1, ~~~~\lambda_{2}=1-2D.$$ 
and the corresponding eigenvectors are given by 
$$v_{1}=(1,1)^{T}, ~~~~v_{2}=(-1,1)^{T}.$$
Consider the linear change of variables given by the eigenvectors.   
$$
    \begin{pmatrix}
    y_1 \\
    y_2 
    \end{pmatrix}
    =
    \begin{pmatrix}
    1 & -1\\
    1 & 1
    \end{pmatrix}
    \begin{pmatrix}
        x_1 \\
        x_2
    \end{pmatrix}.
$$
Notice that the new variables $y_1$ and $y_2$ represent the difference of population between  
the patches and the total population respectively. Negative values of $y_1$ represent 
initial conditions with more individuals in the second patch. Conversely, positive 
values mean that the first patch is more populated. These variables where used 
in \cite{Holt1985} for different purposes. 
The change of variables casts the original vector field (with $r=1$) to:
    \begin{align}\label{eq:nf_wk}
    \begin{split}
        \dot{y}_1 &= (1-2D) y_1 + \left(\frac{1}{4k} -\frac{1}{4}\right) \left(y_{1}^{2}+ y_{2}^{2}\right) - \left(\frac{1}{2}+ \frac{1}{2k} \right) y_1 y_2,\\
        \dot{y}_2  &= y_2 - \left( \frac{1}{4k} + \frac{1}{4} \right ) \left(y_{1}^{2}+ y_{2}^{2} \right) - \left(\frac{1}{2}- \frac{1}{2k} \right) y_1 y_2. 
    \end{split}
    \end{align}
    
    \noindent This vector field will be used later to study how the heteroclinic connection depends on the normalized carrying capacity $k$. For the time being, we focus on the symmetric case. If we chose $k=1$, then, some terms vanish and the resulting ODE is 
    \begin{align}\label{eq:nf}
    \begin{split}
        \dot{y}_1 &= (1-2D) y_1 -y_1 y_2,\\
        \dot{y}_2 & = y_2 -\frac{1}{2}\left(y_{1}^{2}+ y_{2}^{2} \right).\\ 
    \end{split}
    \end{align}    
    Let us analyze this system more carefully. In first place, we notice that it has only 
    ecological meaning for trajectories within the region $y_2 > 0$ and $-y_1 \leq y_2 \leq y_1$. That is, when the total population is positive and the difference of the subpopulations 
    is less than the total one. It is evident that this region is positively 
    invariant by the flow. 
    In these coordinates, the four equilibria are given by 
    \begin{align*}
        P_0 &= (0,0), \\
        P_1 & = \left(\sqrt{1-4D^{2}}, 1-2D \right),\\
        P_2 & = \left(-\sqrt{1-4D^{2}}, 1-2D \right),\\
        P_3 &= (0,2).
    \end{align*}
    The Jacobian matrix in these coordinates reads
    $$
    \begin{pmatrix}
        (1-2D) - y_2 & - y_1 \\
        -y_1 & 1-y_2 
    \end{pmatrix}.
    $$

      \begin{figure}
    \centering
    \begin{tikzpicture}[scale=0.8]
    \draw[->][black, ultra thick] (0,-2.5) -- (0,6) node[anchor=east]{$y_2$};
    \draw[->][black, ultra thick] (-5,0) -- (5,0) node[anchor=north]{$y_1$};
    \draw[red, thick] (0,0) -- (5,5) node[anchor=south]{$y_{2}=y_{1}$};
    \draw[red, thick] (0,0) -- (-5,5) node[anchor=south]{$y_{2} = - y_{1}$};
    \draw[dashed][gray, ultra thick] (-5,2) -- (5,2) node[anchor=south]{$1-2D$};
    \filldraw [black] (0,0) circle (4pt);
    \filldraw [black] (0,-0.5) circle (0.pt) node[anchor=west]{$P_0$};

    \filldraw [black] (0,5) circle (4pt) node[anchor=west]{$P_3$};
    [dashed]
    \draw[dashed] [-{stealth[scale=2]}](0.05,0.5) -- (0.4,0.5);
    \draw[dashed] [-stealth](-0.05,0.5) -- (-0.4,0.5);
    
    \draw[dashed] [-stealth](0.05,1) -- (0.8,1);
    \draw[dashed] [-stealth](-0.05,1) -- (-0.8,1);

    \draw[dashed] [-stealth](0.05,1.5) -- (1,1.5);    
    \draw[dashed] [-stealth](-0.05,1.5) -- (-1,1.5);

    \draw[dashed] [stealth-](0.1,2.5) -- (2,2.5);
    \draw[dashed] [stealth-](-0.1,2.5) -- (-2,2.5);

    \draw[dashed] [stealth-](0.1,4) -- (2.5,4);
    \draw[dashed] [stealth-](-0.1,4) -- (-2.5,4);

    \draw[dashed] [stealth-](0.1,3.2) -- (2.3,3.2);
    \draw[dashed] [stealth-](-0.1,3.2) -- (-2.3,3.2);

    \end{tikzpicture}
    \caption{{\small Sketch of the phase space of System~\eqref{eq:nf}. The red lines 
    delimit the admissible region (i.e. points outside the admissible region correspond 
    to points of the original system with some negative coordinate). 
    The vertical axis is an heteroclinic connection between 
    $P_0$ and $P_3$ and it is repelling if $y_2 <2D-1$. The dashed arrows determine 
    this attracting character of the heteroclinic connection. 
    See Section~\ref{ssec:rhc} for more details.}}\label{fig:nf}
\end{figure}
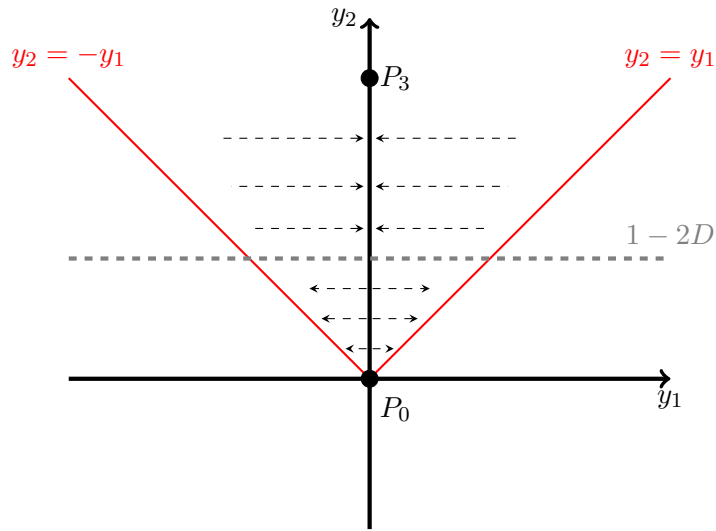

    It is easy to see that the determinant of the matrix for both $P_1$ and $P_2$ is 
    $4D^{2}-1$ which is negative for $D<1/2$. This means that these equilibria are always 
    saddle points whenever they exist. The points $P_1$ and $P_2$ lie outside the region and, 
    therefore are not ecologically meaningful. When $D=1/2$, $P_1$ and $P_2$ merge with 
    the origin in a pitchfork bifurcation. 

    The vertical axis, $\{y_{1}=0\}$, is invariant and it is, for each value of $D$, an 
    heteroclinic connection between $P_0$ and $P_3$. Moreover, it is the only heteroclinic 
    connection for $D>1/2$. Notice 
    that $\{ y_1 = 0 \}$ corresponds to the line $\{x_1 = x_2\}$ in the original coordinates. The 
    first equation of system~\eqref{eq:nf} captures the horizontal flow of the system. Let 
    us rewrite it as: 
    $$\dot{y}_1 = y_1 \left ( (1-2D)-y_2 \right).$$
    When $0<D<1/2$ and $y_2 < (1-2D)$ the horizontal flow is positive for $y_1 > 0$ and 
    negative for $y_1 < 0$, meaning that the vertical axis is repelling for $y_2 < (1-2D)$. 
    This is sketched in Fig.~\ref{fig:nf}. If $D>1/2$ the vertical axis is always attracting.

    This global property of the phase space is relevant in the capacity of the system to 
    exploit the population of one patch to recover population of the other one, at least, at 
    a short time scale. In the region of the phase space in which the vertical axis is 
    repelling, the trajectories of the system that start close to the lines $\{ y_2 =\pm y_1 \}$
    remain close to them (see Fig.~\ref{fig:nf}). This suggests that the recovering of 
    a patch is much slower when the origin is a source. A natural conjecture, to be tested in Section~\ref{sec:kicks}, is that having a saddle is favorable for recovery if one 
    of the patches is populated by few individuals, thus providing more robustness to survival to the metapopulation. The area of the sub-region in which 
    the vertical axis is attracting is given by $2D(1-2D)$ and decreases linearly with 
    $D$ until it gets zero at $D=1/2$. 

    The dependence of the unstable manifold of the origin on $k$ starts at second order. We can 
    study the second term of the Taylor expansion of the manifold by applying the parameterization method~\cite{HaroCLMF16}. That is, we 
    consider system~\eqref{eq:nf_wk} and consider a parameterization $U$ of the manifold 
    as 
    $$
    \begin{cases}
    U_{1}(s)=as^2 + \mathcal{O}(s^3),\\
    U_{2}(s)=s+bs^{2} + \mathcal{O}(s^3 ).
    \end{cases}
    $$
    Where $s$ is a real parameter and $a$ and $b$ are constants to be determined. These constants 
    can be obtained by imposing invariance of the manifold up to second order. That is, we 
    select $a$ and $b$ so the following equation is fulfilled: 
    $$Y(U(s)) = \dot{U}(s) \lambda s +\mathcal{O}(s^{3}),$$
    Here, $Y$ stands for the vector field of system~\eqref{eq:nf_wk} and $\lambda=1$ is the 
    unstable eigenvalue related to the origin. Solving the latter equation for $a$ and $b$
    results in 
    $$a=\frac{1-k}{4k(1+2D)},~~~~~b=-\frac{1+k}{4k}.$$
    Looking at the sign of $a$ we can understand how the manifold bends (at second order) when 
    $k\neq 1$. As expected, if $k<1$ the manifold bends towards the region $y_1 >0$, where 
    the population of the first patch is larger than the population of the second patch. The
    situation is opposite if $k>1$.

    We have seen that when the origin is a source the line $x_1=x_2$ is locally repealing and gets attracting as the origin becomes a saddle. To do so, we have used adapted coordinates ($y_1$-$y_2$). 
    This allows us to conjecture that having a saddle benefits the recovering capacity of the system. This fact is studied in next section introducing random population fluctuations. 


\section{Role of diffusion in metapopulaiton robustness to perturbations}\label{sec:kicks}
  
In Section~\ref{ssec:rhc}, we analyzed the local dynamics close to the origin and concluded that if $D$ is chosen such that the origin is a saddle-type equilibrium point, the heteroclinic connection is locally attracting. In this section, we provide numerical evidence showing how this can be beneficial to the persistence of the metapopualtion under the effect of stochastic perturbations. Specifically, we conjecture that a homoclinic connection that attracts initial conditions nearby may have a positive effect on the persistence of the metapopulation when one of the two subpopulations is close to extinction.

To investigate this phenomenon, we designed a simple numerical experiment simulating random deaths due to some external agent (e.g., sick animals, predators, climatic factors, hunting). The process is as follows:
\begin{enumerate}
\item We consider system~\eqref{eq:nf} and fix a maximal integration time of $T=10$ system units.
This value has been selected experimentally to ensure effective stabilization for all the initial 
conditions in the unperturbed system. Here, effective stabilization means that the initial 
condition is closer to $P_3$ than $10^{-3}$ in $\ell_{2}$-norm. Notice that, in these symmetric 
conditions, $P_3$ does not depend on $D$.
\item We select a grid of $500\times 250$ initial conditions in the region $0< y_{2} <4/5$ and $-y_2 < y_1 < y_2$. Note that $0< y_{2} < 4/5$ are the values of $y_2$ for which the vertical axis is repelling if we pick $D=0.1$.
\item For each initial condition, we perform an integration and, at random intervals following a uniform distribution $\mathcal U(0,0.1)$,
we subtract a random quantity from the total population $y_2$. This quantity also  follows a uniform distribution $\mathcal U(0,1/50)$ 
i.e., the maximum subtracted value is the total carrying capacity divided by $100$.
\item If this random perturbation is large enough to cause $y_2$ to become negative, the integration is terminated, and we consider the population to be extinguished. If $y_1 > y_2$ i.e., $x_2 < 0$ in the original coordinates, we then rearrange the initial condition to $y_1 = y_2$, and the integration continues. A similar procedure is done if $y_2 < - y_1$ i.e., $x_1 < 0$. When the integration is completed, any initial conditions that result in one of the subpopulations being smaller than $1/100$ are labeled as being at risk of extinction in patch 1 (or 2).

\end{enumerate}
We perform this experiment (steps 1-4) $N$ times for each initial condition,
associating an extinction probability with each initial condition. The extinction probability 
associated to each initial condition is computed by dividing the number of simulations that
result to extinction by the total number of simulations $N$ (we have chosen $N=100$).
  \begin{figure}
          \centering
          \includegraphics[width=\textwidth]{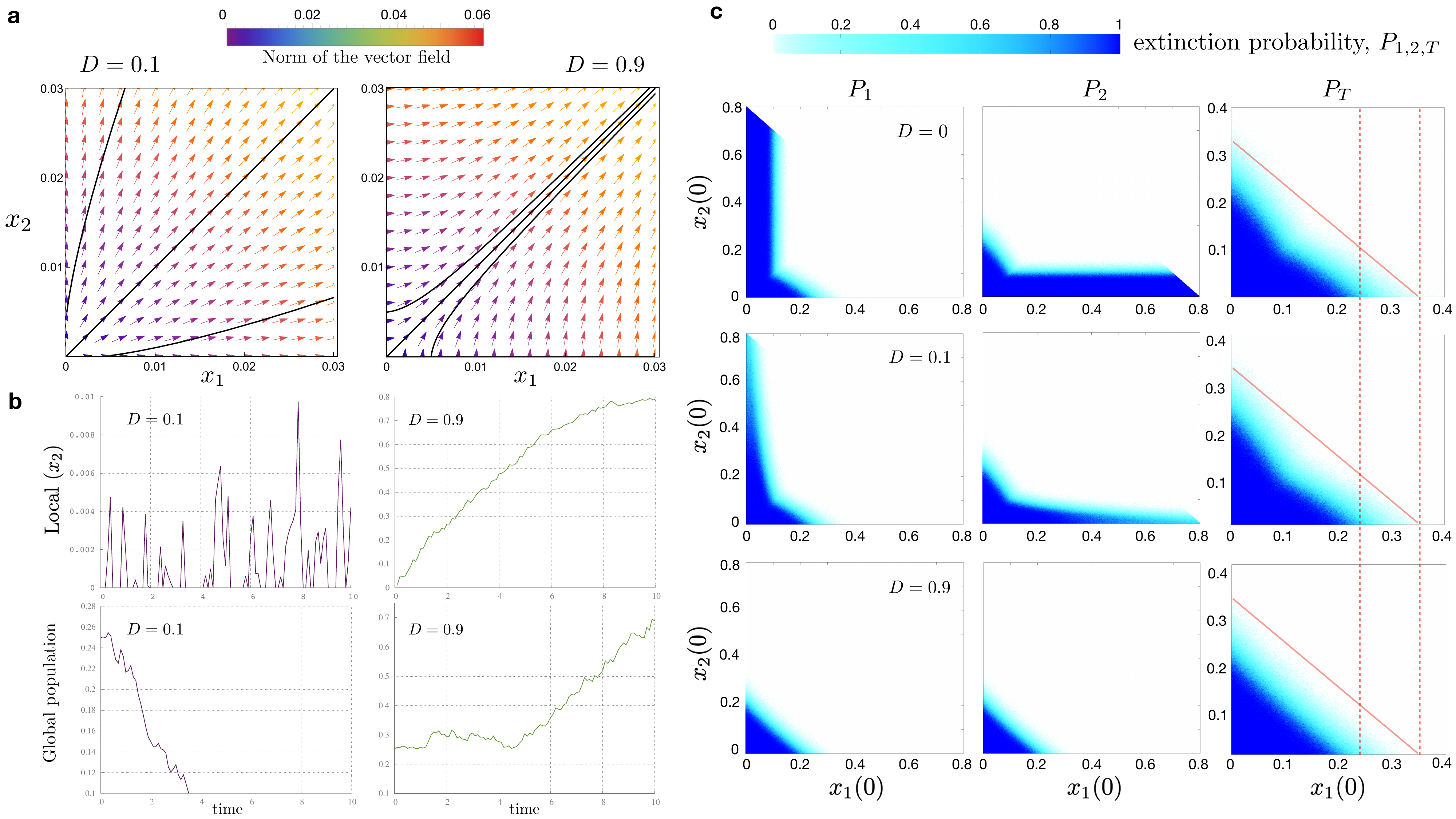}
          \caption{{\small (a) Vector field (the colors of the arrows denote the norm of the field) of the system~\eqref{eq:sistema_2D} for $D=0.1$ (left) and $D=0.9$ (right). (b) Time series with local ($x_2$) and global populations for $D=0.1$ and $D=0.9$. 
          (c) Extinction probabilities ($P_{1,2}$ and $P_T$ denote local and global extinctions, respectively) for initial conditions close to the origin for $D=0$ (first row), $D=0.1$ (second row) and $D=0.9$ (third row). The dashed red lines indicate the values of $x_1(0)$ where extinctions are observed. The red diagonal line indicates the union of these values including $x_2(0)$ to ease the visualization of the extinction regions. In all the panels we fix $k=r=1$.
          }}\label{fig:exts}
  \end{figure}

In Fig.~\ref{fig:exts}(a) we display the direction of the vector field of system~\eqref{eq:sistema_2D} for $D=0.1$ and $D=0.9$
(under symmetric conditions, $k=r=1$). In each of the pictures there appear three trajectories, 
the heteroclinic connection $\{x_1 = x_2\}$ and the trajectories starting at $(10^{-2}, 0)$ and $(0, 10^{-2})$ respectively. When the origin is a source ($D=0.1$) these trajectories spend some time 
close to the axes. When $D=0.9$, both trajectories are rapidly attracted by the heteroclinic 
connection (Fig. \ref{fig:exts}(a)). 
These two plots provide an illustration of the 
main point of Section~\ref{ssec:rhc}, namely, the system is better at recovering 
a patch which is close to extinction if the origin is a saddle.

Figure~\ref{fig:exts}(b) displays some representative time series with local and global extinctions together with survival dynamics at increasing diffusion simulating the random perturbations described above. 
The first row displays the dynamics of the population at patch $2$ ($D=0.1$ first column, $D=0.9$ second column). In the upper panel for $D=0.1$, the local population goes to extinction but recovery is still possible from the other patch although the population remains at extremely low values. The lower panel for this diffusion value displays another simulation with full extinction of the metapopulaiton. In the case $D=0.9$, the attracting 
character of the heteroclinic connection permits the total population to exit the extinction zone rapidly and, therefore, the population persists. The upper plot clearly shows this effect and the population at patch $2$ rapidly increases. The lower panel displays another simulation for the entire metapopulaion, which already has large population values at initial times while it progressively grows towards larger values. We notice here that the population continues increasing after $time = 10$ (results not shown).

Finally, panel (c) displays both local ($P_{1,2}$) and global (total, $P_T$) extinction probabilities for $D=0$ (first row), $D=0.1$ (second
row) and $D=0.9$ (third row) in the space of initial conditions $(x_1(0), x_2(0))$. The case $D=0$ is included for completeness. We notice that extinction probabilities shall be interpreted differently for local total populations. 
Total extinction involves that both time series $x_1$ and $x_2$ abandon the first quadrant simulatenously and the integration is 
terminated. The extinction probability for subpopulations refer to the event \textit{the number of individuals at patch $i=1,2$ is below $1/100$ after $10$ units of time}. In this case, the ecological interpretation is: if the subpopulation starts in the danger zone, then the  system is not able to take it out from it with a certain probability. We notice that, for  $D=0.1$ the probability of subpopulation extinction is close to one along the axis when 
the line $\{x_1 = x_2 \}$ is not attracting. In the case $D=0.9$ the system is always able  to recover a patch in danger unless the total population becomes extinct. These analyses also show that at low diffusion values, both local and global extinctions occur for wide ranges of initial conditions. A slight increase in diffusion (from $D=0$ to $D=0.1$) decreases local extinctions although global ones are still found. The simulations done with $D=0.9$ indicate that local extinctions only take place for smaller population sizes and global extinctions are also restricted to low population values. As expected, diffusion ensures persistence of the populations at a local level keeping the entire metapopulation in a safe state.

\section{Optimal dispersal rate}\label{sec:odr}
A well studied phenomenon from metapopulation theory is the fact that, under suitable conditions, 
the total population can exceed the sum of the carrying capacities of the patches~\cite{Holt1985}. This is known 
to happen when the conditions in one of the patches are better than the conditions in the other 
one. When this hypothesis is fulfilled, the less advantageous patch acts as a source of population. 

For fixed values of any pair of values $r$ and $k$, system~\eqref{eq:sistema_2D} has always 
a coexistence equilibrium point $P_{3}(D)$ that is globally asymptotically attracting. Therefore, for any selection of the parameters of the system and any initial condition, the population tends to stabilize at $P_{3}(D)$. That is, for any threshold $\varepsilon$, any values $r$, $k$ $D$, and 
any initial condition $(x_{1}(0), x_{2}(0))$, there exist a time $t$ for which 
$$\left \| \varphi^{D,r,k}_{t}(x_{1}(0), x_{2}(0)) - P_{3}(D) \right \|_{2}< \varepsilon.$$ 
Here, $\varphi^{D,r,k}_{t}$ denotes the flow of system~\eqref{eq:sistema_2D}. Therefore, the behaviour of the 
coordinates of $P_{3}(D)$ with respect to $D$ determines the long-term behaviour of the total 
population ($x_1+x_2$) of the system for any initial condition. In \cite{ruizherrera2018}, the authors 
define the function 
$$\Omega(D):= \| P_{3} (D)\|_{1}.$$
They are able to prove the following result.
\begin{lemma}[\cite{ruizherrera2018}]\label{lem:rh}
    $$\Omega'(0)=\left ( \frac{1}{r} -1\right)(1-k).$$
\end{lemma}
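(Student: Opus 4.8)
The plan is to compute $\Omega'(0)$ by implicit differentiation of the equilibrium equations with respect to $D$ at $D=0$, using the known fact (Lemma~\ref{lem:nueq}) that at $D=0$ the coexistence equilibrium is $P_3(0)=(1,k)$. Write $P_3(D)=(x_1(D),x_2(D))$ and recall the equilibrium system
\begin{align*}
x_1(1-x_1)+D(x_2-x_1)&=0,\\
rx_2\Bigl(1-\tfrac{x_2}{k}\Bigr)+D(x_1-x_2)&=0.
\end{align*}
Since $\Omega(D)=x_1(D)+x_2(D)$ in the first quadrant, we have $\Omega'(0)=x_1'(0)+x_2'(0)$, so it suffices to find the two derivatives $x_1'(0)$ and $x_2'(0)$.

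First I would differentiate each equation with respect to $D$ and evaluate at $D=0$, plugging in $x_1(0)=1$, $x_2(0)=k$. The first equation gives $(1-2x_1)x_1' + (x_2-x_1) + D(x_2'-x_1') = 0$; at $D=0$ this becomes $-x_1'(0) + (k-1) = 0$, hence $x_1'(0)=k-1$. The second equation gives $r(1-\tfrac{2x_2}{k})x_2' + (x_1-x_2) + D(x_1'-x_2')=0$; at $D=0$, using $1-\tfrac{2x_2}{k}=1-2=-1$, this becomes $-r\,x_2'(0) + (1-k) = 0$, hence $x_2'(0)=\tfrac{1-k}{r}$. Adding, $\Omega'(0) = (k-1) + \tfrac{1-k}{r} = (1-k)\bigl(\tfrac{1}{r}-1\bigr)$, which is the claimed formula.

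There is essentially no hard obstacle here; the only points needing a word of care are (i) justifying that $P_3(D)$ is differentiable in $D$ near $D=0$, and (ii) confirming the branch identification. For (i), I would invoke the implicit function theorem: the Jacobian of the equilibrium system with respect to $(x_1,x_2)$ at $(D,x_1,x_2)=(0,1,k)$ is $\operatorname{diag}(-1,-r)$, which is invertible, so a unique smooth branch $D\mapsto(x_1(D),x_2(D))$ passes through $(1,k)$; by Theorem~\ref{thm:gd} this branch is the globally attracting coexistence equilibrium for small $D$ (it is the continuation of the attracting corner $(1,k)$ of the $D=0$ rectangle). For (ii), the sign check $1-2x_1(0)=-1<0$ and $1-\tfrac{2x_2(0)}{k}=-1<0$ confirms we are on the stable node rather than one of the saddle corners, consistent with the analysis in the proof of Theorem~\ref{thm:gd}. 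With these remarks in place the computation above is complete.
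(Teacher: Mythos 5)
Your computation is correct: implicit differentiation of the two equilibrium equations at $(D,x_1,x_2)=(0,1,k)$ gives $x_1'(0)=k-1$ and $x_2'(0)=(1-k)/r$, whose sum is $\left(\frac{1}{r}-1\right)(1-k)$, and your appeal to the implicit function theorem (the Jacobian $\operatorname{diag}(-1,-r)$ being invertible) legitimately produces a smooth branch through $(1,k)$. Note, however, that the paper itself offers no proof to compare against: Lemma~\ref{lem:rh} is simply quoted from the cited reference of Ruiz-Herrera and Torres, so your argument is a self-contained derivation rather than a variant of an in-paper one. The only point I would tighten is the branch identification: the cleanest justification that the continuation of $(1,k)$ is the coexistence equilibrium $P_3(D)$ of Theorem~\ref{thm:gd} for small $D>0$ is that the continuations of the other nontrivial $D=0$ equilibria immediately leave the first quadrant (differentiating the second equation at $(1,0)$ gives $x_2'(0)=-1/r<0$, and the first equation at $(0,k)$ gives $x_1'(0)=-k<0$), so the interior equilibrium must be the branch through $(1,k)$; your sign check of the Jacobian alone shows local stability of that branch but does not by itself exclude the other continuations. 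With that remark added, the proof is complete and arguably more explicit than the literature it replaces.
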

Henceforth, the function $\Omega$ is increasing nearby $D=0$ if $r>1$ and $k>1$ or if $r<1$ and $k<1$. 
As $\Omega(0)>1+k$, under these conditions, $\Omega$ can exceed the sum of the 
carrying capacities of the patches. The behaviour of $\Omega$ has been analyzed previously for arbitrarily 
large diffusion values (namely $D\to \infty$). For instance, results in \cite{Holt1985, Arditi2015} show 
that dispersal is beneficial if $k>1$ and $r/k > 1$ or $k<1$ and $r/k < 1$. These conditions 
are usually refereed as positive (negative) $r-k$ relation. As it is stated in \cite{ruizherrera2018}, this last condition is more restrictive than  $r>1$ and $k>1$ or if $r<1$ and $k<1$. In \cite{Arditi2015}, the authors derive the following formula: 
$$ \lim_{D\to \infty} \Omega(D)=1+k + (1-k) \frac{k-r}{k+r}, ~~~~k\geq 1, ~~r/k > 1.$$
Notice that formula~\eqref{eq:lc} from Remark~\ref{rmk:lc} is a compact version of 
this one. In particular, using~\eqref{eq:lc} it can be shown that $\Omega(\infty)>1+k$
whenever $k$ is contained between $1$ and $r$ (without assuming $r>1$ or $r<1$).

The qualitative behaviour of the function $\Omega$ with respect to $D$ is also 
analysed in \cite{Arditi2015}. It is shown that, for some values of the parameters the function 
is strictly increasing while for some other, it has a maximum. Asking which 
dispersal rate maximizes $\Omega$ is a natural question. In this section we provide 
an answer and, moreover, recover some insights on the qualitative behaviour of 
the function $\Omega$. 

The first step to tackle the problem is to understand the locus of the equilibrium points of 
system~\eqref{eq:sistema_2D}. Particularly, while the system is two dimensional, the equilibrium 
points are located, for fixed values of $k$ and $r$, in a closed 1-dimensional manifold given by an ellipse, as the following result shows.

\begin{lemma}[Equilibria lie on an ellipse]\label{lem:equieli}
    The equilibrium points of system~\eqref{eq:sistema_2D} are contained on an ellipse with 
    center $(1/2, k/2)$ and axes $\sqrt{a}$ and $\sqrt{b}$, where: 
    \begin{align*}
        a&=(1+rk)/4,\\
        b&=(k^{2} + k/r)/4.
    \end{align*}
\end{lemma}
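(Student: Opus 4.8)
The plan is to eliminate the dispersal parameter $D$ from the equilibrium equations and recognize the resulting curve as an ellipse. The crucial observation is that the coupling terms are antisymmetric — they contribute $D(x_2-x_1)$ to the first equation and $D(x_1-x_2)$ to the second — so they cancel upon addition. Writing the equilibrium conditions as
\begin{align*}
x_1 - x_1^2 + D(x_2 - x_1) &= 0,\\
r x_2 - \alpha x_2^2 + D(x_1 - x_2) &= 0,
\end{align*}
with $\alpha = r/k$ as in the proof of Lemma~\ref{lem:nueq}, and summing them, I obtain that every equilibrium $(x_1,x_2)$ satisfies
$$x_1 - x_1^2 + r x_2 - \alpha x_2^2 = 0,$$
with no dependence on $D$ whatsoever. (Ecologically this simply reflects that diffusion conserves the total number of individuals, so at equilibrium the within-patch net growths of the two patches must cancel.)

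Next I would put this quadratic relation into standard form by completing the square in each variable. Using $r/\alpha = k$ and $\alpha k^2 = rk$ this becomes
$$\Bigl(x_1 - \tfrac12\Bigr)^2 + \alpha\Bigl(x_2 - \tfrac{k}{2}\Bigr)^2 = \tfrac14 + \tfrac{\alpha k^2}{4} = \frac{1 + rk}{4}.$$
Since $\alpha>0$ and the right-hand side is strictly positive, this is a genuine (non-degenerate) ellipse with center $(1/2,k/2)$. Dividing through by $(1+rk)/4$ exhibits the semi-axis along the $x_1$-direction as $\sqrt{(1+rk)/4}=\sqrt a$ and the semi-axis along the $x_2$-direction as $\sqrt{(1+rk)/(4\alpha)}$; substituting $\alpha=r/k$ and simplifying gives $(1+rk)/(4\alpha)=(1+rk)k/(4r)=(k^2+k/r)/4=b$, as claimed.

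Finally I would record that the argument is valid for all $D\ge 0$, including $D=0$: there the equilibria are $(0,0),(1,0),(0,k),(1,k)$ by Lemma~\ref{lem:nueq}, and each satisfies $x_1-x_1^2+rx_2-\alpha x_2^2=0$ because $x_1\in\{0,1\}$ and $x_2\in\{0,k\}$ annihilate the two quadratic factors. I do not expect a real obstacle here; the only step that deserves a sentence of justification is that the conic is a true ellipse and not a degenerate one, which follows from positivity of $\alpha$ and of $1+rk$. The single genuinely useful idea is the cancellation of the $D$-terms upon summation — that is precisely what makes the locus $D$-independent, so that the equilibria for every dispersal rate lie on one fixed curve, which will be the basis for the optimization of $\Omega$ carried out next.
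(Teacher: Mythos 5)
Your proposal is correct and follows essentially the same route as the paper: sum the two equilibrium equations so that the antisymmetric $D$-terms cancel, then complete the square in each variable to recognize the $D$-independent locus as the ellipse with center $(1/2,k/2)$ and semi-axes $\sqrt{a}$, $\sqrt{b}$. Your added remarks (non-degeneracy from $\alpha>0$, $1+rk>0$, and the explicit check of the $D=0$ equilibria) are fine but just fill in details the paper leaves implicit.
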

\begin{proof}
We consider once again the system of equations for the equilibria:
\begin{align*}
    x_{1}-x_{1}^{2}+D\cdot (x_{2}-x_{1})&=0,\\
    rx_{2}-\frac{r}{k} x_{2}^{2}+D\cdot (x_{1}-x_{2})&=0.
\end{align*}
Adding both equations, we get the following expression. 
$$x_1 - x_{1}^{2} + r x_2 - \frac{r}{k} x_{2}^{2}=0,$$
which is equivalent to:  
\begin{equation}\label{eq:eli}
\frac{(x_1 -\frac{1}{2})^{2}}{a} + \frac{(x_2-\frac{k}{2})^{2}}{b}=1,
\end{equation}
This is, in fact, the 
equation of an ellipse centered at $(1/2, k/2)$ and with semi-axes $\sqrt{a}$ and $\sqrt{b}$.  
\end{proof}
The previous argument only states that, if $(x_1 , x_2)$ is an equilibrium 
point, then it is contained in the ellipse. The converse is tackled in the 
following Lemma for some points of the ellipse. 
\begin{lemma}[The points of the ellipse are equilibria]\label{lem:eliequi}
    If $(x_1 , x_2)$ verify Eq.~\eqref{eq:eli}, the following identity holds: 
    \begin{equation}\label{eq:D2BE}
        \frac{x_{1}^{2} - x_{1}}{x_{2}-x_{1}} = \frac{\alpha x_{2}^{2} - r x_{2}}{x_{1}-x_{2}}, ~~~~\alpha=\frac{r}{k}.
    \end{equation}
    Moreover, the point $(x_1 , x_ 2)$ is an equilibrium point of system~\eqref{eq:sistema_2D} for 
    $$
    D=\frac{x_{1}^{2} - x_{1}}{x_{2} -x_{1}}. 
    $$
\end{lemma}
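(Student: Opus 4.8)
The plan is to read off both assertions from the single algebraic identity behind Lemma~\ref{lem:equieli}. Recall that~\eqref{eq:eli} was obtained by adding the two equilibrium equations of~\eqref{eq:sistema_2D}, so a point $(x_1,x_2)$ satisfies~\eqref{eq:eli} if and only if
\[
x_1 - x_1^2 + r x_2 - \alpha x_2^2 = 0, \qquad \alpha = \tfrac{r}{k},
\]
equivalently $x_1^2 - x_1 = -(\alpha x_2^2 - r x_2)$. Throughout one assumes $x_1 \neq x_2$, which is in any case forced by the statement since the quantity $D$ it defines is otherwise meaningless. Dividing the previous relation by $x_2 - x_1$ and absorbing the minus sign into the denominator (so that $x_2 - x_1$ becomes $x_1 - x_2$ on the right-hand side) yields exactly identity~\eqref{eq:D2BE}; thus the first assertion is immediate.

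For the second assertion I would set $D := (x_1^2 - x_1)/(x_2 - x_1)$ and verify the two defining equations of the equilibria of~\eqref{eq:sistema_2D} one at a time. The first equation, $x_1 - x_1^2 + D(x_2 - x_1) = 0$, holds by the very definition of $D$: substituting gives $-(x_1^2 - x_1) + (x_1^2 - x_1) = 0$. For the second equation, $r x_2 - \alpha x_2^2 + D(x_1 - x_2) = 0$, I would invoke the first half of the lemma, according to which $D$ also equals $(\alpha x_2^2 - r x_2)/(x_1 - x_2)$; hence $D(x_1 - x_2) = \alpha x_2^2 - r x_2$ and the left-hand side collapses to $-(\alpha x_2^2 - r x_2) + (\alpha x_2^2 - r x_2) = 0$. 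This completes the verification.

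There is no real obstacle here: the content of the lemma is simply the observation that the sum of the two equilibrium equations does not involve $D$, which is what pins a point of the ellipse to one particular dispersal value. The only item deserving a word of care is the diagonal $\{x_1 = x_2\}$, where the formula for $D$ degenerates and the converse genuinely fails — on that line the only $D$-independent equilibria are the origin (and, when $k=1$, the point $(1,1)$), the second intersection of the ellipse with the diagonal being the $D\to\infty$ limit of the coexistence point from Remark~\ref{rmk:lc} rather than an equilibrium for finite $D$. These points are exactly the ones excluded by requiring $x_1 \neq x_2$, and one may optionally remark that selecting the ecologically meaningful arc, where the resulting $D$ is positive, amounts to imposing $(x_1^2 - x_1)/(x_2 - x_1) > 0$.
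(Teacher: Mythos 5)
Your proof is correct and rests on exactly the same key fact as the paper's: the ellipse equation~\eqref{eq:eli} encodes $x_1^2-x_1=-(\alpha x_2^2-r x_2)$, from which the identity~\eqref{eq:D2BE} follows after dividing by $x_2-x_1\neq 0$, and the two equilibrium equations are then verified directly with $D=(x_1^2-x_1)/(x_2-x_1)$. The only cosmetic difference is that you cite the equivalence already established in the proof of Lemma~\ref{lem:equieli}, whereas the paper re-derives it via square completion and the observation $a/b=\alpha$; your closing remark on the diagonal $\{x_1=x_2\}$ matches the paper's discussion following the lemma.
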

\begin{proof}
    Notice that the second statement is trivial if identity~\eqref{eq:D2BE} holds. Before 
    proving~\eqref{eq:D2BE}, we observe that, if $a$ and $b$ are defined as in the 
    statement of Lemma~\ref{lem:equieli}, then: 
    $$
    \frac{a}{b}= \frac{1+\alpha k^2}{k^2 + \alpha^{-1}} = \alpha.
    $$
    Notice also that, by square completion, 
    $$x_{1}^{2} -x_{1}=(x_1 - 1/2)^{2} -1/4.$$
    If $(x_1 , x_2 )$ verify Eq.~\eqref{eq:eli}, then the above quantity is equal to 
    $$a - (1/4 + k^{2} \alpha/4) - (\alpha x_{2}^{2} - \alpha k x_{2}).$$
    The identity is proven by noticing that 
    $$(1/4 + k^{2} \alpha / 4) = a,$$
    and recalling that $\alpha k = r$. Indeed, 
    $$
     \frac{x_{1}^{2} - x_{1}}{x_{2}-x_{1}} = \frac{- (\alpha x_{2}^{2} - \alpha k x_{2})}{x_{2}-x_{1}}= \frac{\alpha x_{2}^{2} - r x_{2}}{x_{1}-x_{2}}.
    $$
\end{proof}
We stress the fact that identity~\eqref{eq:D2BE}  does not provide 
a one-to-one map from the ellipse to the set of equilibria of system~\eqref{eq:sistema_2D}. For 
instance, the origin always belongs to both the ellipse and the set of equilibria but identity~\ref{eq:D2BE} does not make any sense if $x_1 = x_2 = 0$. 
On the other hand, if $x_{2} = x_{1}$, the value 
of $D$ cannot be recovered from identity~\eqref{eq:D2BE}. For the coexistence equilibrium point, 
this happens, for instance, in the case $r=1$, $k=1$. 
In this case, however, the coexistence equilibrium does not depend on $D$ (see Section~\ref{ssec:rhc}).  

To obtain the best dispersal rate i.e. the one that maximizes the total population, one has 
to look for the maximum of the function $\Omega$ on the ellipse~\eqref{eq:eli}. Notice that as 
the objective function is continuous and the set of feasible solutions a compact, this problem 
has always a solution. However, the solution may correspond to a negative value of 
the dispersal rate $D$. Interestingly enough, the singularity of identity~\eqref{eq:D2BE} implies 
that, for some choices of $r$ and $k$, the population is maximized at the limit $D\to \infty$. The 
next result provides the solution of the aforementioned optimization problem and it is tackled 
using Lagrangian Multipliers.  

\begin{theorem}[Optimal dispersal rate]\label{thm:odr}
    Let $(\bar{x}_{1}(D), \bar{x}_{2}(D))$ be the coexistence solution of system~\eqref{eq:sistema_2D} and
    $\Omega(D)=\bar{x}_{1}(D)+\bar{x}_{2}(D)$, then: 
    \begin{equation}\label{eq:OTP}
    \Omega(D)\leq \frac{1}{2} \left ( \sqrt{1+ \frac{r^2 +1}{r}k + k^2 } +(1+k)\right)=:\Omega^{*}.
    \end{equation}
    Moreover, this bound is sharp: There exist $D^{*}$ for which $\Omega(D^{*})=\Omega^{*}$.
   
\end{theorem}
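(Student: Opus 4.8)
The plan is to solve the constrained optimization problem explicitly. By Lemma~\ref{lem:equieli}, every equilibrium of system~\eqref{eq:sistema_2D}—in particular the coexistence point $(\bar x_1(D),\bar x_2(D))$—lies on the ellipse~\eqref{eq:eli} with center $(1/2,k/2)$ and semi-axes $\sqrt a$, $\sqrt b$, where $a=(1+rk)/4$ and $b=(k^2+k/r)/4$. Since $\Omega(D)=\bar x_1(D)+\bar x_2(D)$ is the value of the linear functional $f(x_1,x_2)=x_1+x_2$ at a point of this ellipse, we have $\Omega(D)\le \max\{x_1+x_2:\ \text{Eq.}~\eqref{eq:eli}\}$ for every $D$. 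So the first step is to maximize the linear function $x_1+x_2$ over the ellipse; the resulting maximum is an upper bound for $\Omega$, valid for all $D$.

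The maximization over the ellipse I would do with Lagrange multipliers (as the theorem statement advertises), or equivalently by the standard fact that the support function of an ellipse $\frac{(x_1-c_1)^2}{a}+\frac{(x_2-c_2)^2}{b}=1$ in a direction $u$ is $\langle c,u\rangle+\sqrt{a u_1^2 + b u_2^2}$. With $u=(1,1)$ and $c=(1/2,k/2)$ this gives
\begin{equation*}
\max\{x_1+x_2\} = \frac{1+k}{2} + \sqrt{a+b} = \frac{1+k}{2} + \frac12\sqrt{(1+rk)+(k^2+k/r)},
\end{equation*}
and since $(1+rk)+(k^2+k/r) = 1 + \frac{r^2+1}{r}k + k^2$, this is exactly $\Omega^*$ as defined in~\eqref{eq:OTP}. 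This establishes the inequality $\Omega(D)\le\Omega^*$. I would also record the maximizing point $(x_1^*,x_2^*) = \bigl(\tfrac12 + \tfrac{a}{2\sqrt{a+b}},\ \tfrac k2 + \tfrac{b}{2\sqrt{a+b}}\bigr)$, obtained from the Lagrange conditions $\tfrac{x_1-1/2}{a}=\tfrac{x_2-k/2}{b}=\lambda$.

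For sharpness I need to exhibit a $D^*$ with $\Omega(D^*)=\Omega^*$, i.e.\ show that the ellipse-maximizer $(x_1^*,x_2^*)$ is actually the coexistence equilibrium for some dispersal value. Here I would invoke Lemma~\ref{lem:eliequi}: any point on the ellipse satisfying~\eqref{eq:eli} is an equilibrium of the system for the dispersal value $D^* = \frac{(x_1^*)^2-x_1^*}{x_2^*-x_1^*}$, provided the denominator $x_2^*-x_1^*$ is nonzero; and one checks $x_1^*\neq x_2^*$ precisely when $a\neq b$, i.e.\ when $1+rk\neq k^2+k/r$. Then I must argue that this equilibrium is the \emph{coexistence} one (interior to the first quadrant) rather than one of the boundary equilibria or the origin—this follows because $(x_1^*,x_2^*)$ maximizes $x_1+x_2>0$ over the ellipse, so it has strictly positive coordinates and is the GAS interior point $P_3(D^*)$ of Theorem~\ref{thm:gd}. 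The main obstacle is the bookkeeping around the degenerate and sign cases: when $a=b$ the maximizer lies on the line $x_1=x_2$ where~\eqref{eq:D2BE} cannot recover $D$ (this is the $r=k=1$-type situation already flagged after Lemma~\ref{lem:eliequi}, where $\Omega$ is constant and the bound is attained trivially or in the limit $D\to\infty$), and one must also confirm that the recovered $D^*$ is nonnegative—if the Lagrange point yields $D^*<0$, the genuine maximum of $\Omega$ over admissible $D\ge 0$ is approached as $D\to\infty$, consistent with Remark~\ref{rmk:lc}. I would handle these by splitting on the sign of $a-b$ and checking that $x_2^*-x_1^*$ and $(x_1^*)^2-x_1^*$ have the same sign in the relevant regime, so that $D^*\ge 0$; the generic case is clean and the boundary cases are exactly the ones where $\Omega^*$ coincides with the limit value from Remark~\ref{rmk:lc}.
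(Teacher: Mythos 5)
Your proposal follows essentially the same route as the paper: restrict to the ellipse of Lemma~\ref{lem:equieli}, maximize the linear functional $x_1+x_2$ over it (Lagrange multipliers / support function), identify the maximum $\tfrac{1+k}{2}+\sqrt{a+b}=\Omega^{*}$, and then invoke Lemma~\ref{lem:eliequi} to recover a dispersal value $D^{*}$ at which the maximizer is the (unique, hence coexistence) interior equilibrium. The inequality part of your argument is correct and is exactly the paper's computation.

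A few details in your sharpness bookkeeping are off, though. First, the maximizer you record is wrong by a factor of two: your own Lagrange conditions $\tfrac{x_1-1/2}{a}=\tfrac{x_2-k/2}{b}=\lambda$ give $x_1^{*}=\tfrac12+\tfrac{a}{\sqrt{a+b}}$, $x_2^{*}=\tfrac k2+\tfrac{b}{\sqrt{a+b}}$; the point you wrote does not lie on the ellipse and is inconsistent with the value $\tfrac{1+k}{2}+\sqrt{a+b}$ you (correctly) obtain from the support function. Second, the degenerate locus where $x_1^{*}=x_2^{*}$, so that \eqref{eq:D2BE} cannot recover $D$, is \emph{not} $a=b$ (which amounts to $r=k$). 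Using $a-b=\tfrac{(r-k)(1+rk)}{4r}$ and $a+b=\tfrac{(r+k)(1+rk)}{4r}$, the condition $x_1^{*}=x_2^{*}$ reduces to the curve $k=r(3+r)/(1+3r)$, i.e.\ precisely the perfect-mixing case singled out in the remark after Corollary~\ref{cor:btp}, where the bound is attained only in the limit $D\to\infty$; moreover, for $r=k\neq1$ the coexistence equilibrium does depend on $D$ and $\Omega$ is not constant, so your parenthetical about that case is also inaccurate. Third, the sign of $D^{*}$ is not part of this theorem (it is the content of Corollary~\ref{cor:btp}), and your claim that whenever $D^{*}<0$ the supremum of $\Omega$ over $D\ge0$ is approached as $D\to\infty$ is false in general: for instance for $r<1<k$ the function $\Omega$ is decreasing for $D>0$ and the supremum over $D\ge 0$ sits at $D=0$. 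None of this affects the bound \eqref{eq:OTP}, but the case analysis you sketch for sharpness should be redone with the correct degeneracy condition.
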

\begin{proof}
    In view of Lemma~\ref{lem:equieli}, the maximal total population can be 
    obtained by maximizing the function $\Omega$ on the ellipse~\eqref{eq:eli}. This 
    is equivalent to solve the following optimisation problem:
    \begin{align*}
         \text{maximize}~~~& X_1 + X_2 + \frac{1}{2}(1+k),\\
         \text{subject to}~~~& \frac{X_{1}^{2}}{a}+\frac{X_{2}^{2}}{b}=1.
    \end{align*}
    Here, $X_1 = x_1 -1/2$ and $X_2 = x_2 - k/2$. 
    To solve the problem, we name $f(X_1, X_2)=X_1+X_2 + \frac{1}{2}(1+k)$ the 
    objective function and $H(X_{1}, X_{2})=\frac{X_{1}^{2}}{a}+\frac{X_{2}^{2}}{b}-1$ the 
    restriction function. Let us maximize the Lagrangian function. 
    $$\mathcal{L}(X_1 , X_2)=f(X_1 , X_2)+ \lambda H(X_1 , X_2),$$
    for some real parameter $\lambda$. If we look for the roots of $\nabla \mathcal{L}$
    we notice that those are: 
    $$ \left (\pm \frac{a}{\sqrt{a+b}}, \pm \frac{b}{\sqrt{a+b}} \right).$$
    Since we are looking for a maximum of $f$, it is clear that we must use the solutions with 
    positive sign (the negatives, which have no ecological meaning, correspond to the global minimum). 
    This optimal point being an equilibrium point comes as a straightforward application of Lemma~\ref{lem:eliequi}. 
    The maximal value of $f$ is given by
    $$\sqrt{a+b} + \frac{1}{2}(1+k),$$
    which corresponds to $\Omega^{*}$ if we expand $a$ and $b$ in terms of $k$ and $r$. 
    Moreover, if we we recast $(X_1, X_2)$ to the original coordinates, we can use 
    Lemma~\ref{lem:eliequi} to find the value of $D$ for which the optimum is an equilibrium point:
    \begin{equation}\label{eq:ODR}
    D^{*}=\frac{x_{1}^{2} -x_{1}}{x_{2}-x_{1}},
    \end{equation}
    with 
    $$x_{1}=\frac{a}{\sqrt{a+b}} +\frac{1}{2},~~~ x_{2}=\frac{b}{\sqrt{a+b}} + \frac{k}{2}.$$
\end{proof}
\begin{figure}[t!]
\centering
\includegraphics[width=0.475\textwidth]{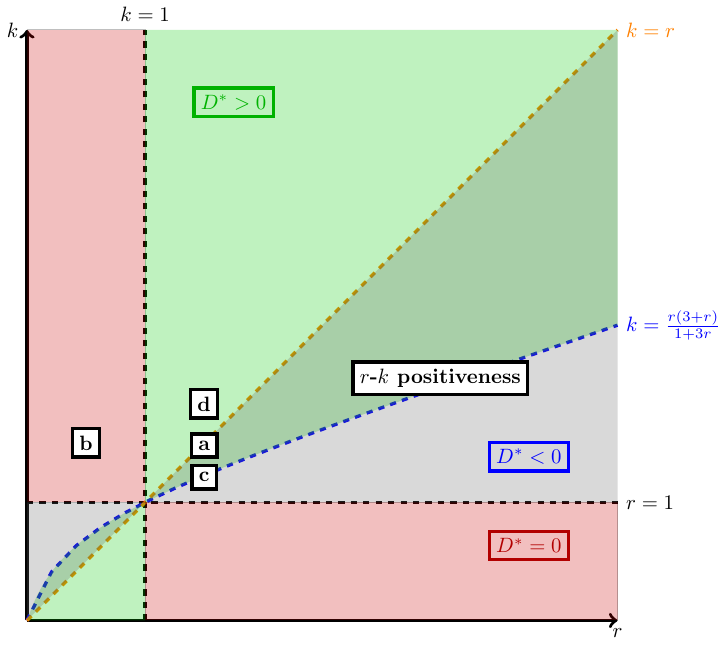}
 \caption{{\small Behaviour of the total population $\Omega$ in the parameter space ($r, k$). Green areas denote 
 those values of the parameters for which the total population is achieved for a positive 
 value of the dispersal rate. Grey ones, are areas where $r$-$k$ positiveness is fulfilled. Note
 that green and grey areas overlap. Grey area which do not overlap with green area are 
 those values for which the total population 
 increases monotonically for positive dispersal rate but the optimal value is not achieved. 
 Red regions denote those values of $r$ and $k$ for which the total population decreases 
 for positive dispersal rates. See Corollary~\ref{cor:btp} for more details. Boxes with letters from \textbf{a} to \textbf{d} correspond to the values of $r$ and $k$ used in Fig.~\ref{fig:TPvsD} where the total population is shown at increasing values of $D$.
 }}\label{fig:btp}  
 \end{figure}
 
There are several consequences of Theorem~\ref{thm:odr}. Notice first that, if $r=1$, then 
$\Omega(D)=1+k$. This means that, if the two patches have the same growth rate, the total 
population is maximized at the sum of the carrying capacities. This is a known 
fact that can be recovered from Eq.~\eqref{eq:OTP}. Moreover, the linear coefficient 
of the quadratic polynomial inside the square root is 
$$\frac{r^{2}+1}{r},$$
which is a function whose minimal value is $2$ and it is achieved precisely at $r=1$. That is,
$$\Omega^{*}(D) \geq \frac{1}{2} \left (  \sqrt{1+ 2k +k^{2}} +(1+k)\right)=(1+k).$$
Hence, $\Omega^*$ is always superior (or equal) to the sum of the carrying capacities of the patches. 
As we stated before, this does not mean that the optimal total population is achieved for positive $D$. To check when 
the optimal population is achieved for a positive value of the dispersion rate we have to 
examine the sign of Eq.~\eqref{eq:ODR}. The parameter space $(r,k)$ can be partitioned 
according to the behaviour of the total population with respect to the dispersion rate (see Fig.~\ref{fig:btp}). The 
following result provides this classification.
\begin{corollary}[Behaviour of the total population]\label{cor:btp}
    $\Omega^{*}$ is achieved for a positive dispersion rate $D^{*}$ if one 
    of the following conditions hold: 
    \begin{enumerate}
        \item $r>1$ and $k>k^{*}(r)$, 
        \item $r<1$ and $k<k^{*}(r)$,
    \end{enumerate}
    where 
    $$k^{*}(r)=\frac{r(3+r)}{1+3r}.$$
    Moreover, if $r<1$ and $k^{*}(r)<k<1$ then $\Omega'(D)>0$ for $D>0$. 
    In any other case, $\Omega(D)<1+k$ if $D>0$.
\end{corollary}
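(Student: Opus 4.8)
The plan is to start from the explicit maximiser of $x_{1}+x_{2}$ on the ellipse~\eqref{eq:eli} found in Theorem~\ref{thm:odr},
\[
x_{1}=\frac{a}{\sqrt{a+b}}+\frac12,\qquad x_{2}=\frac{b}{\sqrt{a+b}}+\frac{k}{2},
\]
together with the value $D^{*}=(x_{1}^{2}-x_{1})/(x_{2}-x_{1})$ of Eq.~\eqref{eq:ODR}; call this point $O^{+}$. Since $O^{+}$ is the unique point of the ellipse maximising $x_{1}+x_{2}$, Theorem~\ref{thm:odr} gives $\Omega(D)\le\Omega^{*}$ with equality only at $D=D^{*}$, so $\Omega^{*}$ is attained at a \emph{positive} dispersal rate precisely when $D^{*}>0$, and cases~1--2 reduce to computing $\operatorname{sign}(x_{1}^{2}-x_{1})$ and $\operatorname{sign}(x_{2}-x_{1})$. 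The numerator is immediate: $x_{1}>1/2>0$, and writing $x_{1}-1=(2a-\sqrt{a+b})/(2\sqrt{a+b})$ and using $a+b=(1+rk)(r+k)/(4r)$ one gets, after rationalising, $\operatorname{sign}(x_{1}^{2}-x_{1})=\operatorname{sign}\!\bigl(r(1+rk)-(r+k)\bigr)=\operatorname{sign}\!\bigl(k(r^{2}-1)\bigr)=\operatorname{sign}(r-1)$.

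For the denominator, write $x_{2}-x_{1}=\bigl(2(b-a)+(k-1)\sqrt{a+b}\bigr)/(2\sqrt{a+b})$; using $2(b-a)=(kr+1)(k-r)/(2r)$ and $\sqrt{a+b}=\tfrac12\sqrt{(1+rk)(r+k)/r}$, clearing the positive denominator and multiplying by $2r$ reduces to the sign of $P+R$, where
\[
P:=(kr+1)(k-r),\qquad R:=(k-1)\sqrt{r(1+rk)(r+k)}.
\]
The key identity is $R^{2}-P^{2}=(kr+1)\,\phi(k)$ with $\phi(k)=(k-1)^{2}r(r+k)-(kr+1)(k-r)^{2}$, and the crucial point is that the $k^{3}$-terms of $\phi$ cancel, so $\phi$ is only quadratic and factors as
\[
\phi(k)=(r-1)(3r+1)\,k\,\bigl(k-k^{*}(r)\bigr),\qquad k^{*}(r)=\frac{r(3+r)}{1+3r}.
\]
Since $P+R$ is continuous for $k>0$ and can vanish only where $R^{2}=P^{2}$ \emph{and} $P,R$ have opposite signs, one checks this happens on $(0,\infty)$ exactly at $k=k^{*}(r)$; the limits $k\to0^{+}$ (where $P+R\to-2r<0$) and $k\to\infty$ (where $P+R\to+\infty$) then give $\operatorname{sign}(x_{2}-x_{1})=\operatorname{sign}\!\bigl(k-k^{*}(r)\bigr)$. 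Combining, $D^{*}>0$ iff $\operatorname{sign}(r-1)=\operatorname{sign}\!\bigl(k-k^{*}(r)\bigr)\ne0$, which is exactly cases~1 and~2; note $k^{*}(r)-1=(r^{2}-1)/(1+3r)$, so $k^{*}(r)$ always lies on the same side of $1$ as $r$, making both cases non-empty.

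For the remaining assertions I would use that $\Omega(D)=f(P_{3}(D))$, with $f(x_{1},x_{2})=x_{1}+x_{2}$, is the restriction of a linear functional to the arc $\mathcal{A}=\{P_{3}(D):D\ge0\}$ of the ellipse~\eqref{eq:eli}. In an angular parametrisation of the ellipse, $f$ is unimodal, with a single maximiser $O^{+}$ and a single minimiser $O^{-}$ (the antipodal point), so $f$ restricted to any sub-arc is monotone unless that sub-arc contains $O^{+}$ or $O^{-}$. A short computation of $4a^{2}-(a+b)$ and $4b^{2}-k^{2}(a+b)$ shows that $O^{-}$ always has a negative coordinate when $r\ne1$, hence $O^{-}\notin\mathcal{A}$. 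Therefore, whenever $D^{*}<0$ (so $O^{+}\notin\mathcal{A}$ too), $\Omega$ is strictly monotone on $[0,\infty)$, its direction fixed by $\Omega'(0)=(1/r-1)(1-k)$ from Lemma~\ref{lem:rh}: if $r$ and $k$ lie on the same side of $1$ then $\Omega'(0)>0$ and $\Omega'(D)>0$ for all $D>0$ (in particular for $r<1$ and $k^{*}(r)<k<1$, where indeed $D^{*}<0$); if $r$ and $k$ lie on opposite sides of $1$ then $\Omega'(0)<0$ and $\Omega(D)<\Omega(0)=1+k$ for $D>0$; and if $r=1$ then $\Omega\equiv1+k$. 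I expect the denominator step to be the main obstacle: recognising the cancellation that makes $\phi$ quadratic with root $k^{*}(r)$, and then keeping track of signs correctly when passing between ``$P+R>0$'' and ``$R^{2}-P^{2}>0$'', since squaring an inequality is reversible only with the right sign information. The monotonicity argument for the final statements is routine once this is in place.
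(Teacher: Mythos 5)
Your argument follows the same route as the paper's proof: reduce positivity of $D^{*}$ in Eq.~\eqref{eq:ODR} to the signs of the numerator $x_{1}^{2}-x_{1}$ and the denominator $x_{2}-x_{1}$ at the ellipse maximiser from Theorem~\ref{thm:odr}, and settle the monotonicity claims from the uniqueness of the critical point together with Lemma~\ref{lem:rh}. The difference is that you actually supply the algebra the paper only asserts: the chain $\operatorname{sign}(x_{1}^{2}-x_{1})=\operatorname{sign}\bigl(k(r^{2}-1)\bigr)=\operatorname{sign}(r-1)$ and, for the denominator, the factorization $\phi(k)=(r-1)(3r+1)k\bigl(k-k^{*}(r)\bigr)$ combined with the endpoint limits of $P+R$ are correct (I checked them) and are exactly what lies behind the paper's unproved claims ``numerator positive iff $r>1$'', ``denominator positive iff $k>k^{*}(r)$''. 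Your monotonicity step (linear functional unimodal on the ellipse, the minimiser $O^{-}$ having a negative coordinate so it is off the admissible arc, direction fixed by $\Omega'(0)$) is a more geometric rendering of the paper's one-line ``no other relative optima'' argument and is acceptable at the same level of rigour, granting, as the paper implicitly does, that $D\mapsto P_{3}(D)$ traces the ellipse continuously and injectively (injectivity follows from Lemma~\ref{lem:eliequi}).

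Two caveats. First, your side claim that $r=1$ gives $\Omega\equiv 1+k$ is false unless $k=1$: for $r=1$, $k\neq 1$ the point $(1,k)$ is not an equilibrium once $D>0$, and indeed $\Omega(\infty)=4k/(1+k)<1+k$; what is true (and is what the last clause needs there) is that the ellipse maximiser is $(1,k)$ itself, attained only at $D^{*}=0$, so $\Omega(D)<1+k$ for all $D>0$. Second, in the residual region $r>1$, $1<k<k^{*}(r)$ your analysis yields $\Omega$ strictly increasing, hence $\Omega(D)>1+k$, which contradicts the corollary's literal final sentence; this is, however, a defect of the printed statement rather than of your reasoning (the paper's own proof never addresses that sentence, and the caption of Fig.~\ref{fig:btp} describes precisely that grey region as monotonically increasing), so be aware that you are proving the corrected version of the last clause, not the one as stated.
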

\begin{proof}
    From the proof of Theorem~\ref{thm:odr} we know that $\Omega$ has only one critical value 
    that is an equilibrium point inside the first quadrant. We also know that the maximal 
    value is achieved at: 
    $$ D^{*}=\frac{x_{1}^{2} -x_{1}}{x_{2}-x_{1}},$$
    where, 
    $$x_{1}=\frac{a}{\sqrt{a+b}} + \frac{1}{2},~~~ x_{2}=\frac{b}{\sqrt{a+b}}+\frac{k}{2}.$$
    Of course, $D^{*}$ is positive whenever the numerator and the denominator have equal sign. The numerator has positive sign if and only if $r>1$. On the other 
    hand, the denominator has positive sign if and only if 
    $k>r(3+r)/(1+3r)$.
    This proves points $1$ and $2$. If $r<1$ and $k^{*}(r)<k<1$, the optimal 
    value is achieved for some negative value of $D$. However, $\Omega(D)$ is increasing 
    near $D=0$ (see Lemma~\ref{lem:rh}). As there are no other relative optimal values of $D$, $\Omega(D)$ must be 
    an increasing function of $D$ for $D>0$.
\end{proof}

\begin{remark}
    For $k= r(3+r)/(1+3r)$, the optimal value is achieved in perfectly mixing conditions, 
    that is, $D^{*} \to \infty$. 
\end{remark}
 \begin{figure}[t!]
     \centering
      \includegraphics[width=0.65\textwidth]{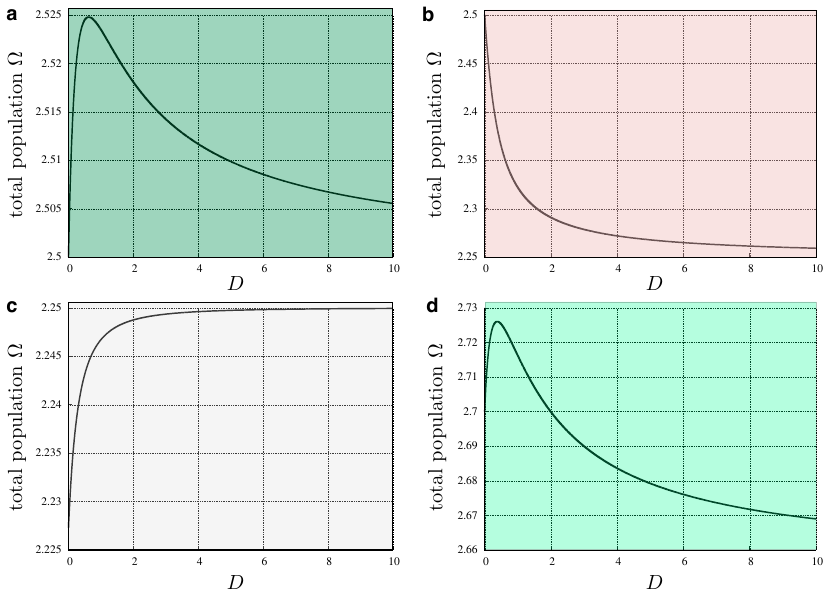}
     \caption{{\small Total population $\Omega$ with respect to diffusion with: (a)  $r=1.5$, $k=1.5$, (b) $r=0.5$, $k=1.5$, (c) $r=1.5$, $k=k^{*}(1.5)\approx 1.227$, and (d) $r=1.5$, $k=1.7$. The colors of the plots match with the colors of the previous figure.}}
     \label{fig:TPvsD}
 \end{figure}

The information provided by Corollary~\ref{cor:btp} is shown in  Fig.~\ref{fig:btp}. The 
horizontal axis represents the value of $r$ and, the vertical, the value of $k$. Different 
regions are colored according the behaviour of the function $\Omega$ with respect to the 
dispersal rate. The red zone represent values of the parameters $r$ and $k$ for which having 
positive $D$ is always detrimental in terms of total population.This is because the function 
$\Omega$ is monotonically decreasing with respect to $D$. Zones which are not coloured
in red correspond to the conditions appearing in \cite{Ruiz-Herrera2018}. Therefore, 
in non-red zones $\Omega$ is locally increasing for $D$ near to zero. 
We coloured in gray regions in which $r-k$ positiveness condition is fulfilled. In this 
region, the population at the limit case $D\to \infty$ is larger than the sum of the 
carrying capacities of the patches (see, for instance, \cite{Freedman1977}).
 The line $k=r$ determines the limit case 
$\Omega(\infty) < 1+ k$ if $k>r$ and $r>1$ or $k<r$ and $r<1$. The region determined by the 
curves $k=r(3+r)/(1+3r)$ and $k=r$ correspond to values for which $\Omega^{*} > 1+k$ but 
$\Omega(\infty) < 1 + k$. The green regions represent values 
of the parameters for which $\Omega^{*}$ is achieved for some positive $D$. Notice 
that the green region and the gray region overlap (this overlapping is seen as a dark 
green). For values of the parameters in this overlapped region, $D^*$ is larger 
than the limit case. In the grey region, the function $\Omega$ is strictly increasing 
for positive $D$ as the maximal value is achieved for some negative $D$.

Figure~\ref{fig:TPvsD} displays the evolution of $\Omega$ as function of $D$ for four different 
choices of the parameters $r$ and $k$, labeled from \textbf{a} to \textbf{d} in Fig.~\ref{fig:btp}. The plots have been obtained by implementing a pseudo arc-length 
continuation to the equation of the equilibrium points starting at the point $(1,k)$ for $D=0$. The program
is terminated when the characteristic curve crosses the homotopy level $\{D=10\}$. 
This program has been also  
used to check the correctness of the formulas obtained in this section. Figure~\ref{fig:TPvsD}(a) has been obtained with $r=1.5$ and $k=1.5$. At $D=0$ the 
value of $\Omega$ is $1+k=2.5$. The characteristic curve increases until some maximal value that 
can be computed using formula~\eqref{eq:OTP} (in this case $\Omega^{*}\approx 2.5247$). After the 
maximal value is reached, the curve decreases monotonically until a horizontal asymptote. This 
asymptote can be computed by using the asymptotic formula of Remark~\ref{rmk:lc}. In this case
$\Omega$ decreases asymptotically to the value $2.5$ (i.e. the sum of carrying capacities: $1+k$).
In Fig.~\ref{fig:TPvsD}(b) we set $r=0.5$ and $k=1.5$. These are conditions 
where the theory predicts that positive positive dispersal rate is detrimental to the total population. 
Indeed, the function $\Omega$, in this case, is strictly decreasing until the asymptotic limit $\Omega = 2.25$. Notice
that $\Omega^{*}$ is achieved for some $D<0$. 
In Fig.~\ref{fig:TPvsD}(c) we have selected $r=1.5$ and $k=k^{*}(1.5)$, where
$k^{*}$ is taken as in the statement of Corollary~\ref{cor:btp}. As it can be seen in the plot, the theory
predicts that the maximal value of $\Omega$ is achieved at the limit case and, therefore, the function 
$\Omega$ is strictly increasing for positive values of $D$. The asymptotic value $\Omega$ is, in this 
case, $2.25$. This value can be predicted with both Theorem~\ref{thm:odr} and Remark~\ref{rmk:lc}.
Finally, Fig.~\ref{fig:TPvsD}(d) illustrates a case in which the limit of $\Omega$ is below the sum of the carrying capacities. The values of the parameters are $r=1.5$
and $k=1.7$. The maximal total population is slightly larger ($\approx 2.72$) and it is achieved 
for some positive $D$.

\section{Conclusions}\label{sec:con}
The investigation of metapopulation mathematical models is very extensive in the literature. Most of these studies have focused on small metapopulations considering few pacthes~\cite{Hastings1993,Arditi2015,Arditi2016, Sardanyes2010}, or in multi-patch systems~\cite{Allen1993}. Discrete-time metapopulation models have focused on inspecting the role of dispersal in the stability of chaotic dynamics and its role in metapopulations' persistence~\cite{Hastings1993,Allen1993}. For instance, Allen and co-workers showed that local chaotic dynamics involved lower extinction probabilities under both intrinsic and global noise~\cite{Allen1993}. Additionally, two-patch time-continuous models have been extensively investigated. Arditi and colleagues used two coupled logistic systems to study the total population under arbitrarily large dispersal rates~\cite{Arditi2015}. Later on they explored the same system of coupled logistic models using the balanced dispersal model instead of linear diffusion~\cite{Arditi2016}. Moreover, the exploration of two-patch models with a generic growth function revealed certain conditions of optimality, indicating that the total population can surpass the sum of carrying capacities of each independent patch~\cite{Holt1985}. More recently, a similar system was explored for local populations growing hyperbolically instead of exponentially~\cite{Sardanyes2010}.

In this manuscript we have revisited one of the simplest metapopulation models, the one in which a diagonal connectivity matrix couples two logistic differential equations. The investigation of metapopulations with few patches allows developing analytical studies, providing clear information on dynamical phenomena such as equilibrum points and bifurcations that could be conserved in higher dimensions. Unlike the previous works on two-patch, time-continuous metapopulations, we have focused on global aspects of the system. Our work provides two novel main contributions to metapopulation theory. 

The first contribution has to do with the robustness of the system to external perturabations. That is, the 
capacity of the system to recover from a situation in which one of the patches is close 
to extinction. It has been known for decades that, for all the values of the parameters, 
an empty patch will be eventually populated by individuals from the other patch. In this process,  
the populations of both patch tend to stabilize at some coexistence equilibrium point. 
However, references in literature do not take into account that the stability type of 
the global extinction equilibrium point (namely, the origin) 
can play a relevant role in the fragility of the metapopulation in terms of extinction. Concerning this point, we have identified two different scenarios: 
(i) when the origin is a source, the trajectories with initial conditions close to the axes (i.e. 
situations in which one of the patches is close to extinction) need some time to separate 
from them; (ii) when the origin is a saddle, there is a unique, locally attracting, 
heteroclinic connection between the origin and the coexistence equilibrium that leads low initial condition to achieve a safe state in a short period of time. By including noise in the dynamics we have shown that recovery is more prone when the origin is a saddle point. Notice that, 
at first sight, the origin being a source could seem a better situation to prevent 
from extinction under perturbations. To illustrate this fact we have run a simulation in 
which the system has stochastic losses of individuals in both patches.  

The second contribution is related to a well known counter-intuitive property of the system: 
under suitable conditions, having positive dispersal rate, leads the total population to 
stabilize overcoming the sum of the carrying capacities of the patches. These phenomenon was 
established for the limit case under the $k-r$ positiveness hypothesis~\cite{Freedman1977,Holt1985, Arditi2015}. That is, if we 
name $r_1$ and $k_1$ the growth rate and the carrying capacity of the first patch, and $r_2$, $k_2$ the respective quantities related to the second patch, $k-r$
positiveness means that $k_1 > k_2$ and $r_1 k_2 > r_2 k_1$ or $k_2 > k_1$ and $r_2 k_1 > r_1 k_2$.
On the other hand, a less restrictive hypothesis was identified in \cite{ruizherrera2018}. If
$k_1 > k_2$ and $r_1 > r_2$ (or $k_1 < k_2$ and $r_1 < r_2$) and $D$ is small enough, the 
the total population exceeds also the sum of the carrying capacities. 

In this work we study analytically the problem for all the positive values of the 
dispersal rate. In particular, we derive a formula for the optimal dispersal rate and 
a bound (that is fulfilled for the optimal dispersal rate) to the total population. To 
avoid cumbersome notation we have used dimensionless units but we stress that the 
change of units can be pulled-back to recover our formulas for standard units. 
The bound to the total population reads as:
$$\overline{\Omega}^{*}=\frac{1}{2} \left ( \sqrt{ (k_{1}+k_{2})^{2} + \frac{(r_{1}-r_{2})^{2}}{r_1 r_2} k_1 k_2} +k_1 + k_2 \right), ~~~\overline{D}^{*}=D^{*}/r_1.$$

The main takeaway Section~\ref{sec:odr} is a unified framework in which both $r$-$k$ positiveness and 
the less restrictive conditions appearing in \cite{Ruiz-Herrera2018} and all 
positive values of the dispersal rate are included. In particular, there are regions in which 
the $r$-$k$ positiveness is fulfilled but the maximal population is achieved for some finite 
value of $D$. We derive formulae that can be used to predict both the optimal dispersal rate, 
the maximal population and the asymptotic behaviour of the function $\Omega$. 

The simplicity of the model studied in this work has allowed us to conduct an analytic description 
of global aspects of the system (namely, global dynamics and behaviour of the total population 
as a function of dispersal). We remark, however, that the phenomena investigated in this paper e.g., that changes in local behavior can impact on global dynamics, may appear in other complex systems. For instance, such results found in low-dimensional metapopulation models may be found in higher dimensions.

\section*{Technical details}
The programs corresponding to Sections \ref{sec:kicks} and \ref{sec:odr} have been written in C from 
the scratch and are available upon request. Library LAPACK~\cite{lapack99} has been used for linear algebra and the Taylor package~\cite{JorbaZ05}) has been used to perform the numerical integrations of system~\eqref{eq:nf}.  

\section*{Declarations}
 
\textbf{Ethical Approval}\\
Not applicable
 \\
 \\
\textbf{Competing interests} 
\\
The authors declare no competing interests.
\\ 
\\
\textbf{Authors' contributions}
\\
MJ, RO and DP carried out the mathematical calculations. MJ, DP, and JS performed the numerical simulations. All authors wrote and reviewed the manuscript. 
 \\
 \\
\textbf{Funding}
\\
This work has been supported by the Spanish State Research Agency (AEI) through grants PGC2018-100928-B-I00 (DP), PGC2018-100699-B-I00 (MCIU/AEI/FEDER, UE) (MJC), and by Catalan Government grant 2017 SGR 1374 (MJC). JS has been supported by the Ram\'on y Cajal grant RYC-2017-22243 funded by MCIN/AEI/10.13039/501100011033 ”FSE invests in your future”, and by the 2020-2021 Biodiversa+ and Water JPI joint call under the BiodivRestore ERA-NET Cofund (GA N°101003777) project MPA4Sustainability with funding organizations: Innovation Fund Denmark (IFD), Agence Nationale de la Recherche (ANR), Fundação para a Ciência e a Tecnologia (FCT), Swedish Environmental Protection Agency (SEPA), and grant PCI2022-132926 funded by MCIN/AEI/10.13039/501100011033 and by the European Union NextGenerationEU/PRTR. This work has been also funded through the Severo Ochoa and Mar\'ia de Maeztu Program for Centers and Units of Excellence in R\&D (CEX2020-001084-M). We thank CERCA Programme/Generalitat de Catalunya for institutional support. DP has been supported by the Research Program 2021-2023 of Universidad Internacional de la Rioja. RO thanks to the Faculty for the Future program of the Schlumberger Foundation for the scholarship.
\\
\\
\textbf{Availability of data and materials}
\\
The software developed for this article is available upon request.

\bibliographystyle{apalike}
\bibliography{cobi}
\end{document}